%% file: Final_Journal_Submission.tex
\documentclass[format=acmsmall, review=false,screen]{acmart}
\AtBeginDocument{\hypersetup{colorlinks=true,allcolors=blue}}

\usepackage{acm-ec-26}
\usepackage[T1]{fontenc}
\usepackage{mathtools}
\usepackage{bm}
\usepackage{tikz}
\usepackage{forest}
\usetikzlibrary{arrows.meta}
\usepackage{booktabs} % For formal tables
\usepackage[ruled]{algorithm2e} % For algorithms

\SetAlFnt{\small}
\SetAlCapFnt{\small}
\SetAlCapNameFnt{\small}
\IncMargin{-\parindent}

\usepackage{amsthm}

\newcounter{exctr}
\renewcommand{\theexctr}{\arabic{exctr}}

\newtheorem{theorem}{Theorem}

\newtheorem{lemma}{Lemma}
\newtheorem{proposition}{Proposition}
\newtheorem{corollary}{Corollary}

\newtheorem{definition}{Definition}

\newtheorem{example}{Example}

\makeatletter
\renewcommand{\thmhead}[3]{%
\thmheadfont{#1\ #2}%
\thm@notefont{: [#3]}%
}
\makeatother

\newcommand{\UU}{\mathcal{U}}

\newcommand{\IR}{\mathcal{I}}
\newcommand{\rk}{\textup{rk}} 
\newcommand{\da}{\mathtt{DA}}

\newcommand{\eada}{\mathtt{EADA}} 
\newcommand{\jbc}{\mathtt{JBC}}

\newcommand{\BB}{\mathcal{B}}

\usepackage{microtype}
\usepackage{setspace}
\setcitestyle{authoryear}

\title{Justifiable Priority Violations}

\author{Josu\'e Ortega}
\affiliation{%
\institution{Queen's University Belfast}
\country{UK}	}

\author{R. Pablo Arribillaga}
\affiliation{%
\institution{Instituto de Matemática Aplicada San Luis, Universidad Nacional de San Luis, and CONICET}
\country{Argentina}	}

\addtocontents{toc}{\protect\setcounter{tocdepth}{-1}}

\begin{abstract}
\vspace*{2em}

Addressing the large inefficiencies generated by the Deferred Acceptance (DA) mechanism requires priority violations, but which ones are justifiable? 
The leading approach is to ask individuals if they consent to waive their priority ex-ante. 
We develop an alternative question-free solution, in which a priority violation is justifiable whenever the affected student either (i) directly benefits from the improvement, or (ii) is unimprovable under any assignment that Pareto-dominates DA. 
This endogenous justifiability criterion permits improvements unattainable by the leading consent-based mechanism under any consent structure. We provide a ``\emph{just below cutoffs}'' mechanism that always finds a strongly justifiable matching whenever DA's outcome is inefficient, and build on it to construct a polynomial-time algorithm that expands justifiable improvements iteratively, converging to a DA improvement that cannot be Pareto-improved by any justifiable matching without strictly expanding the beneficiary set.
Finally, we prove theoretically that both the ex-ante consent and the endogenous justifiability frameworks have important limitations in reaching Pareto-efficient outcomes, and use simulations to quantify how binding these constraints are in practice.

\vspace*{1em}
\noindent {\bf Keywords}: school choice, consent-based mechanisms, justifiable priority violations.\\
\vspace*{17em}
\end{abstract}
{\protect\setcounter{tocdepth}{3}}

\begin{document}

% Title page for title and abstract only.
\begin{titlepage}

\maketitle
\makeatletter \gdef\@ACM@checkaffil{} \makeatother

% Optionally include a table of contents

\thispagestyle{empty}
%\setcounter{tocdepth}{2} % adjust to 1 if desired
\iffalse
\addtocontents{toc}{\protect\setcounter{tocdepth}{-1}}
\tableofcontents
\addtocontents{toc}{\protect\setcounter{tocdepth}{2}}
		\vspace{5cm}
\fi	
\end{titlepage}

% Paper body
\setlength{\parskip}{6pt plus 2pt minus 1pt}

\onehalfspacing
%%%%%%%%%%%%%%%%%%%%%%%%%%%%%%%%%%%% 
\section{Two Paths Beyond Deferred Acceptance} \label{sec:introduction} 
%%%%%%%%%%%%%%%%%%%%%%%%%%%%%%%%%%%%%%%%%
A school choice mechanism must balance two often conflicting objectives: respecting legally defined priorities and allocating scarce school places efficiently. 
The student-proposing Deferred Acceptance algorithm (DA) has become the dominant benchmark because it produces the student-optimal stable matching: an allocation that respects priorities that is not Pareto-dominated by any other stable matching. 
Nonetheless, DA can generate sizeable welfare losses, so it is natural for an Education Authority to consider alternative mechanisms that Pareto-improve upon the DA outcome.\footnote{The Education Authority of Flanders is one such example; see \citet{cerrone2022school}.} Any such improvement necessarily entails priority violations, but which violations are justifiable?
Our contribution is to propose a new concept of controlled priority violations called \emph{justifiable} that allows us to improve upon DA whenever its outcome is Pareto-inefficient.

In our framework, students whose priorities are violated fall into three categories: (1) beneficiaries, who improve their DA placement; (2) unimprovable students, whom no DA-improvement could help; and (3) improvable non-beneficiaries, who could have benefited under some DA improvement but who do not under the chosen one. In our view, the first two have no grounds for complaint: they either gained or could never have gained. The third category is different, as such a student could object: \emph{why was my priority overridden if I did not gain, and I could have?}

This potential objection motivates our justifiability criterion: a DA improvement is \emph{justifiable} if it violates priorities only for beneficiaries and unimprovable students, but never for improvable non-beneficiaries. The criterion is endogenous: whether a violation is justifiable depends on what the improvement accomplishes for the affected student relative to what was feasible, not on permissions granted in advance.

We show that justifiable improvements always exist when DA is inefficient (Theorem \ref{thm:existence}) and characterize a benchmark obtained by the \emph{Just Below Cutoffs} mechanism (JBC).
JBC assigns to each school that rejected an improvable student the highest-priority among those students below the cutoff (the lowest-priority student assigned to each school in DA), so no improvable student's priority is ever violated by its trades, not even those of students who happen to benefit. 
We call this type of justifiable improvement \emph{strongly justifiable}. Any matching satisfying this stronger notion corresponds to executing disjoint improvement cycles. These strongly justifiable matchings are essentially unique: if a student improves in two strongly justifiable outcomes, she gets assigned to the same school in both; furthermore, JBC selects the largest collection of such cycles.

Despite its normative appeal, justifiability is generally incompatible with full Pareto-efficiency (Proposition \ref{prop:efficiency}): achieving full efficiency typically requires violating improvable non-beneficiaries' priorities, precisely the violations our criterion excludes. To achieve a constrained efficiency property while preserving justifiability, we construct a polynomial-time algorithm, SJBC+, that builds on JBC via a sequential expansion. 

Starting from the JBC matching, each iteration searches for augmenting paths in a bipartite graph on improvable students: these paths reroute existing trading cycles to admit additional students, with priority violations permitted only against current beneficiaries. Each successful rerouting enlarges the beneficiary set, whose members then justify further violations in the next round. In a sense, the algorithm builds ``consent for priority violations'' endogenously: starting from unimprovable students and iterating until no augmenting path can expand the beneficiary set further. The resulting matching is justifiable and cannot be Pareto-improved by any justifiable matching without strictly expanding the set of beneficiaries (Theorem \ref{thm:stpcplus}).

Justifiability differs fundamentally from the dominant alternative: school choice with ex‑ante consent. In that approach, the mechanism designer asks students in advance if their priority could be overridden in case doing so benefits others at no cost for themselves. The set of consenting students is fed into a consent-based mechanism, such as Efficiency-Adjusted DA (EADA), which always returns a DA improvement that respects the priorities of non-consenting students. We show that justifiability allows for improvements unattainable by EADA under any consent set, just as EADA's outcomes need not be justifiable (Theorem \ref{prop:cc_vs_eada};  our simulations in Section \ref{sec:simulations} show that this occurs very frequently). While a number of other weaker stability notions also allow for controlled priority violations, such as legality and priority-neutrality, they converge to EADA with full consent as the student-optimal matching within these restrictions; thus, justifiability departs from these normative frameworks too.

More importantly, despite EADA's full efficiency guarantee when all students consent, we show that EADA and any other consent-based mechanism also face important limitations when only a fraction of students consent.  In particular, any consent-based mechanism that incentivizes consent (i.e. does not give a worse placement to any student who consents) and is constrained efficient (i.e. its outcome is not Pareto-dominated by another matching which respects the priorities of non-consenting students), fails to be \emph{efficient-whenever-possible}: there are efficient matchings that respect the priorities of non-consenters, yet the mechanism returns an inefficient allocation (Theorem \ref{thm:bigimpo}). Taken together, our results show the complexity of guaranteeing full efficiency when improving upon DA on either the justifiability or the consent-based frameworks.

To understand how frequently our theoretical results bite, we use simulations to evaluate SJBC+ against EADA with full and observed consent.  SJBC+ generates more beneficiaries than EADA even when all students consent, and achieves full Pareto-efficiency in over $60\%$ of the cases, highlighting that even though SJBC+'s outcomes can be dominated by another justifiable matching, more often than not its outcomes are not dominated by any matching, justifiable or not. 

In summary, our paper proposes a new normative notion that allows for defensible priority violations for Education Authorities that would like to obtain a matching that respects most priorities but that allows for controlled priority violations in order to reach more efficient outcomes. Our results show the limits of this framework with regard to how much efficiency it can recover in theory and in practice. Importantly, the concept of justifiability allows for improvements that cannot be obtained by existing approaches, and thus provides a new direction for the school choice literature.

\paragraph{Outline.} Section \ref{sec:literature} discusses the literature. Section \ref{sec:model} introduces the model. Section \ref{sec:results} develops the theoretical results on justifiable DA improvements. Section \ref{sec:ewp} discusses the relationship between justifiability and school choice with consent. Section \ref{sec:simulations} analyzes the performance of our proposed algorithm using simulations. Section \ref{sec:conclusion} concludes.

%%%%%%%%%%%%%%%%%%%%%%%%%%%%%%%%%%
\section{Literature}
\label{sec:literature}
%%%%%%%%%%%%%%%%%%%%%%%%%%%%%%%%%%%%%%%%%%

The student-proposing Deferred Acceptance mechanism is often Pareto-inefficient for students \citep{abdulkadirouglu2003}, yet any Pareto improvement over it must necessarily violate some student's priority.
This tension has motivated a large literature on which priority violations are acceptable when improving upon DA.  

A striking feature of this literature is that several conceptually different normative frameworks end up coinciding on the same foundational adjustment, Efficiency-Adjusted Deferred Acceptance \citep[EADA;][]{kesten2010school}.
Kesten's original justification is ex-ante consent: a priority violation is acceptable when the affected student agrees in advance to waive her priority.
\citet{reny2022efficient} proposes priority neutrality, requiring (roughly) that a displaced student can only lose priority when the violation is in a precise sense unavoidable for helping others; in that framework, EADA delivers the unique student-optimal priority-efficient outcome.
\citet{ehlers2020legal} propose legality, a set-valued stability concept in which blocking is admissible only if the student can be assigned to the blocking school in some assignment under consideration. They show there is a unique student-optimal legal assignment which coincides with EADA's outcome.\footnote{EADA's properties have been extensively studied; see \cite{bando2014existence, tang2014new, dur2019school, troyan2020essentially, troyan2020obvious, tang2021weak, chen2023regret, shirakawa2024simple}.}
Our notion of justifiability departs from this literature: it yields justifiable matchings that lie outside the range of EADA under any consent structure, and thus need not be priority-neutral nor legal.

Our mechanisms build on the improvement-cycle tradition, but target a different constraint.
The Just-Below Cutoffs (JBC) algorithm is related to the Top Priority algorithm \citep{dur2019school}, which iteratively resolves improvement cycles while allowing violations against any student who consents ex-ante; in contrast, JBC uses a single-step construction that violates only {unimprovable} students' priorities.
JBC is also closely related to stable improvement cycles \citep{erdil2008s}, which address inefficiency driven by arbitrary tie-breaking under weak priorities: starting from a stable outcome under a fixed tie-break, they implement Pareto-improving cycle exchanges that remain stable with respect to the underlying weak priorities.
JBC targets a different source of inefficiency: it improves upon DA even under strict priorities, but only by violating the priorities of unimprovable students.

Several other strands are complementary to ours. 
\citet{kitahara2021improving} analyze how to improve upon DA under partial priority structures.
Other work studies when DA can be improved while preserving strategy-proofness \citep{kesten2019strategy}, and develops refinements based on consistency requirements \citep{dougan2020consistent}.
Recent work also shows that DA produces an inefficient matching with high probability, and that almost all students are improvable \citep{ortega2025identifying}, and documents limitations that no Pareto improvement over DA can overcome \citep{ortega2025pareto}. These theoretical results, together with the rural-hospital property satisfied by every matching that Pareto-dominates DA \citep{alva2019stable}, help to explain EADA's empirical performance in counterfactuals using real-life data \citep{ortega2023cost}.
A complementary literature examines how to improve upon DA while minimizing the number of blocking pairs or triplets generated \citep{dougan2021minimally, kwon2020justified, afacan2022improving}.
Finally, our work is also related to \citet{decerf2024incontestable}, who study incontestable assignments that can be defended against appeals; however, in their framework every improvement upon DA is incontestable.

%%%%%%%%%%%%%%%%%%%%%%%%%%%%%%%%%%
\section{Model}
\label{sec:model}
%%%%%%%%%%%%%%%%%%%%%%%%%%%%%%%%%%%%

A many-to-one school choice problem $P$ consists of two finite sets of students $I$ and schools $S$.
Each student $i$ has a strict preference $\succ_i$ over $S\cup\{s_\emptyset\}$, where $s_\emptyset$ denotes being unassigned.
Each school $s$ has quota $q_s\in\mathbb{N}$ and a strict priority order $\triangleright_s$ over students (except for $s_\emptyset$, with $q_{s_\emptyset}=\infty$).\footnote{All of our results hold for arbitrary quotas, although our examples use unit capacities for clarity.}

A matching $\mu$ assigns each student to one school, $\mu:I\rightarrow S\cup\{s_\emptyset\}$, such that
$|\mu^{-1}(s)|\le q_s$ for all $s\in S$.
A matching $\mu$ is non-wasteful if, whenever a school $s\in S$ has an empty seat under $\mu$, every student prefers her assignment to $s$.
Throughout the paper, we restrict attention to non-wasteful matchings. 

Let $\rk_i(s)\in\{1,\dots,|S|+1\}$ be the rank of school $s$ under $\succ_i$ (lower is better),
and let $\rk_s(i)$ be the priority rank of student $i$ at school $s$ (lower is better priority).
The matching $\mu$ weakly Pareto-dominates $\nu$ if $\rk_i(\mu_i)\le \rk_i(\nu_i)$ for all $i$; it Pareto-dominates if strict for some $i$.
A matching is Pareto-efficient if it is not Pareto-dominated by any other matching.

We say student $j$ violates student $i$'s priority under $\mu$ if
(i) $i$ prefers $s$ to $\mu_i$, (ii) $\mu_j=s$, and (iii) $\rk_s(i)<\rk_s(j)$.
A matching is stable if it does not violate the priority of any student.

\paragraph{Deferred Acceptance and its Envy Digraph.}
A mechanism maps a problem $P$ to a matching.
The student-proposing Deferred Acceptance (DA) is a well-known example.\footnote{We postpone its description to Appendix \ref{app:da_eada}.}
Let $\da(P)$ denote the matching produced by Deferred Acceptance in problem $P$, and let $\da_i (P)$ denote student $i$'s assignment under $\da(P)$.

The envy digraph $G^{\da(P)}$ has vertex set $I$ and a directed edge $i\rightarrow j$ if student $i$ prefers $\da_j(P)$ to $\da_i(P)$.
A cycle is a sequence $(i_1,\dots,i_k)$ of distinct students with edges $i_\ell\rightarrow i_{\ell+1}$ for $\ell=1,\dots,k-1$ and $i_k\rightarrow i_1$.
A cycle packing $\Pi$ is a collection of disjoint cycles in $G^{\da(P)}$.
Let $V(\Pi)\subseteq I$ be the set of students covered by cycles in $\Pi$.
For $j\in I$, let $N^-(j)\coloneqq\{h\in I:\ h\rightarrow j \text{ in }G^{\da(P)}\}$ denote the in-neighborhood of $j$ in the envy digraph, i.e. the students who envy $j$ in DA.
It is well-known that every Pareto cycle-packing in $G^{\da(P)}$ corresponds to a sequence of trades that lead to a matching that Pareto dominates $\da(P)$.

\paragraph{Improvable students.}
Given a problem $P$, let $\mathcal{M} (P)$ denote the set of matchings that Pareto-dominate $\da(P)$.
A student $i\in I$ is unimprovable if $\mu_i=\da_i(P)$ for every $\mu\in\mathcal{M}(P)$.
Let $\mathcal{U}(P)$ denote the set of unimprovable students, and let $\mathcal{I}(P)\coloneqq I\setminus \mathcal{U}(P)$ denote the set of improvable students.

The well-known observation below provides a useful connection between improvability and the structure of the DA envy digraph.

\begin{lemma}[\cite{tang2014new, dur2019school,ortega2025identifying}]
	\label{lem:improvable_cycle}
	The following three statements are equivalent:
	\begin{enumerate}
		\item A student is improvable.
		\item The corresponding node lies on a cycle in the DA envy digraph $G^{\da(P)}$.
		\item The corresponding node belongs to a non-trivial strongly connected component of $G^{\da(P)}$.
	\end{enumerate}
\end{lemma}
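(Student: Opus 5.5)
We prove the cycle of implications (2)$\Rightarrow$(1)$\Rightarrow$(2), and treat (2)$\Leftrightarrow$(3) as a purely graph-theoretic fact. For (2)$\Leftrightarrow$(3), recall that a strongly connected component is non-trivial exactly when it contains at least two vertices. The envy digraph has no self-loops, since no student prefers her own assignment to itself. If $i$ lies on a cycle $(i_1,\dots,i_k)$ with $k\ge 2$, then all the $i_\ell$ are mutually reachable, so $i$'s component is non-trivial. Conversely, if $i$ shares a component with some $j\neq i$, then there are directed paths $i\rightsquigarrow j$ and $j\rightsquigarrow i$. Their concatenation is a closed walk through $i$, and the shortest closed walk through $i$ contained in it is a cycle containing $i$.

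For (2)$\Rightarrow$(1), take a cycle $(i_1,\dots,i_k)$ through $i$ and define $\mu$ by giving each $i_\ell$ the DA seat of $i_{\ell+1}$ (indices mod $k$), leaving everyone else at her DA assignment. Every school keeps the same number of occupants, so quotas are respected. Each cycle member strictly improves and everyone else is indifferent, so $\mu\in\mathcal{M}(P)$ and $\mu_i\neq\da_i(P)$; hence $i$ is improvable. Non-wastefulness is preserved because the set of filled seats is unchanged and students only move up. Concretely, if $s$ has an empty seat under $\mu$, it had one under $\da(P)$, so by non-wastefulness of DA every student prefers her DA assignment, and hence her $\mu$-assignment, to $s$. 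This is exactly the ``well-known'' cycle-packing observation stated just before the lemma.

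For (1)$\Rightarrow$(2), which is the main step, let $\mu\in\mathcal{M}(P)$ with $\mu_i\neq\da_i(P)$, and let $D=\{j:\mu_j\neq\da_j(P)\}$. Each $j\in D$ strictly prefers $\mu_j$ to $\da_j(P)$. The key claim is that every school $s=\mu_j$ with $j\in D$ is a real school that was full under DA, and that the set of students $D$ vacates from $s$ has the same size as the set $D$ brings into $s$. Two standard facts drive this.
\begin{enumerate}
\item Since $\da(P)$ is stable and non-wasteful, any school $s$ that some $j\in D$ strictly prefers to $\da_j(P)$ must be full in DA. Otherwise $j$ would claim the empty seat. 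The same argument excludes $s_\emptyset$, which no student prefers to her DA assignment because DA is individually rational.
\item Every school weakly preferred in this way is full in $\mu$ as well. The rural-hospital-type fact that $\mu$ and $\da(P)$ fill the same number of seats at each school \citep{alva2019stable} then gives equal inflow and outflow of $D$ at each school.
\end{enumerate}
Given that balance, build a map sending each $j\in D$ to some $j'\in D$ with $\da_{j'}(P)=\mu_j$, choosing $j'$ injectively within each school. Then $j\rightarrow j'$ is an edge of $G^{\da(P)}$, and the map is a permutation of the finite set $D$. Following $i$ under this permutation returns to $i$, which yields a cycle through $i$.

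The obstacle is justifying the seat-counting balance without leaning too heavily on outside results. If full equality of seat counts is unavailable, we would use the weaker argument that every school receiving an entrant from $D$ is full under DA and has no more occupants under $\mu$ than its quota. That already forces at least as many DA occupants of the school to leave as enter. Since all leavers lie in $D$, the injective choice of $j'$ above still exists, and the permutation argument goes through.
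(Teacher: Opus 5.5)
\textbf{Verdict:} your proof is correct, and it supplies an argument the paper does not give. The paper states this lemma as a well-known observation and cites \citet{tang2014new}, \citet{dur2019school} and \citet{ortega2025identifying}. The only part argued in the text is the remark before the lemma that executing a cycle packing Pareto-improves on $\da(P)$, which is your (2)$\Rightarrow$(1). You prove the rest directly, using the standard decomposition of a Pareto improvement over $\da(P)$ into trading cycles.

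\textbf{Suggestion:} promote your ``fallback'' counting argument to the main line, and drop item 2 together with the appeal to \citet{alva2019stable}.

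\begin{itemize}
\item Item 2 (that these schools are full under $\mu$) is asserted rather than argued. Non-wastefulness of $\mu$ does not yield it directly.
\item Item 2 is also unnecessary. Take any school $s$ that receives an entrant from $D$. Fullness under $\da(P)$ and the quota under $\mu$ give $|\{j\in D:\mu_j=s\}|\le|\{j\in D:\da_j(P)=s\}|$, and that is all the injection needs.
\item An injective self-map of the finite set $D$ is a bijection. So you recover the rural-hospital equality as a consequence rather than an input: inflow equals outflow at every school, and no student unassigned under $\da(P)$ moves.
\end{itemize}
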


\paragraph{Justifiability}
\label{sec:cc}

We now formalize which priority violations can be justified when implementing efficiency improvements over $\da(P)$.
The idea behind justifiability is that, while any Pareto improvement over $\da(P)$ must violate some priorities, not all violations are equally defensible.
We distinguish between violations affecting students who can never benefit from any improvement (unimprovable students) and those affecting students who benefit from the improvement itself.

For any matching $\mu\in\mathcal{M}(P)$, define the set of students who benefit from $\mu$ relative to DA.\footnote{For simplicity, we use the notation $\BB(\mu)$ instead of the more cumbersome $\BB(\mu, P)$.}

\begin{definition}
Given $\mu\in\mathcal{M}(P)$, the set of beneficiaries is
\[
\BB(\mu)\coloneqq \{i\in I:\rk_i(\mu_i)<\rk_i(\da_i(P))\}.
\]
\end{definition}

Unimprovable students are unaffected by the choice of efficiency improvement: regardless of which Pareto improvement is selected, they remain assigned to their DA school.
By contrast, non-beneficiaries are students who could benefit under some Pareto improvement but do not benefit under the particular improvement $\mu$.

We now introduce justifiability, our main conceptual contribution: a restriction on which priority violations can be justified when improving upon DA.

\begin{definition}
Given $\mu\in\mathcal{M}(P)$, a priority violation against student $i$ under $\mu$ is justifiable if 
\[
i\in \mathcal{U}(P)\ \cup\ \BB(\mu).
\]
A priority violation against an improvable non-beneficiary is unjustifiable.
\end{definition}

The criterion is endogenous: whether a violation is justifiable depends on what the improvement accomplishes for the affected student, not on permissions granted in advance.

\begin{definition}
A matching $\mu\in\mathcal{M}(P)$ is justifiable if every priority violation in $\mu$ is justifiable.

\end{definition}

\paragraph{Labelled Envy Digraph.}
The envy digraph $G^{\da(P)}$ records who envies whose assignment under $\da(P)$ and tells us which improvements can occur. To analyze the priority violations that such exchanges would generate, we  enrich $G^{\da(P)}$ with labels. To each edge $i\rightarrow j$ we associate the set of improvable students who desire school $\da_j(P)$ (i.e., point to $j$ in $G^{\da(P)}$) and whose priority at $\da_j(P)$ would be violated if $i$ were assigned to $\da_j(P)$:
\[
l(i\rightarrow j)\coloneqq \{h\in \mathcal{I}(P)\cap N^-(j) : \rk_{\da_j(P)}(h)<\rk_{\da_j(P)}(i)\}.
\]
For a cycle packing $\Pi$, define $l(\Pi)\coloneqq \bigcup_{(i\rightarrow j)\in \Pi} l(i\rightarrow j)$. We remark that the labels tell us the priorities that an exchange would violate fixing the allocation for every other student to be DA, without verifying the actual cycle implemented: $i \stackrel{k}{\rightarrow} j$ tell us that an improvement cycle where $i$ takes $j$'s place would  \emph{potentially} violate $k$'s priority, without verifying whether $k$'s priority is actually violated in the actual matching generated by the exchange (which may not occur if $k$ is assigned to a school she prefers over $\da_j(P)$).

We now connect justifiability to the labelled DA envy digraph.
Recall that a cycle packing $\Pi$ consists of disjoint cycles in $G^{\da(P)}$ and induces a Pareto improvement over $\da(P)$ by assigning each student in a cycle the DA school of her successor.
The set $l(\Pi)$ collects the improvable students whose priorities may be violated by executing the trades in $\Pi$.
This allows us to express justifiability as a simple condition on labels.

\begin{lemma}
\label{prop:cc_packing}
Fix a problem $P$ and a cycle packing $\Pi$ with induced matching $\mu^\Pi$.
The matching $\mu^\Pi \in \mathcal{M}(P)$ is justifiable if and only if:
	\[
	l(\Pi)\subseteq \BB(\mu^\Pi).
	\]
\end{lemma}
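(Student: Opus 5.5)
The plan is to characterise exactly which students can suffer a priority violation under $\mu^\Pi$, and to show that, among improvable students, these are precisely the members of $l(\Pi)$ who are not beneficiaries. Throughout, let $\sigma(j)$ denote the successor of $j$ in its cycle of $\Pi$. Then $\mu^\Pi_j=\da_{\sigma(j)}(P)$ for $j\in V(\Pi)$, and $\mu^\Pi_j=\da_j(P)$ otherwise. Since $\mu^\Pi\in\mathcal{M}(P)$, every student weakly prefers $\mu^\Pi_i$ to $\da_i(P)$.

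\emph{Step 1 (violations come only from edges of $\Pi$).} Suppose $j$ violates $i$'s priority at $s=\mu^\Pi_j$ under $\mu^\Pi$.

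First, $j\in V(\Pi)$. Otherwise $s=\da_j(P)$. Since $i$ prefers $s$ to $\mu^\Pi_i$, and $\mu^\Pi_i$ is weakly preferred to $\da_i(P)$, the student $i$ prefers $s$ to $\da_i(P)$ while having higher priority at $s$ than $j$. This is a blocking pair of $\da(P)$, contradicting the stability of DA.

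Hence $s=\da_k(P)$ with $k=\sigma(j)$, and $j\rightarrow k$ is an edge of $\Pi$. The same transitivity argument shows that $i$ prefers $\da_k(P)$ to $\da_i(P)$, so $i\in N^-(k)$; moreover $\rk_{\da_k(P)}(i)<\rk_{\da_k(P)}(j)$. Therefore, if $i$ is improvable, then $i\in l(j\rightarrow k)\subseteq l(\Pi)$.

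\emph{Step 2 ($\Leftarrow$).} Assume $l(\Pi)\subseteq\BB(\mu^\Pi)$, and take any violated student $i$. Either $i\in\mathcal{U}(P)$, or by Step 1 $i\in l(\Pi)\subseteq\BB(\mu^\Pi)$. In both cases the violation is justifiable, so $\mu^\Pi$ is justifiable.

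\emph{Step 3 ($\Rightarrow$).} Assume $\mu^\Pi$ is justifiable, and let $h\in l(j\rightarrow k)$ for some edge $j\rightarrow k$ of $\Pi$. Suppose for contradiction that $h\notin\BB(\mu^\Pi)$. Then $\mu^\Pi_h=\da_h(P)$, because $\mu^\Pi$ weakly dominates DA.

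Since $h\in N^-(k)$, $h$ prefers $\da_k(P)=\mu^\Pi_j$ to $\da_h(P)=\mu^\Pi_h$. Since $h\in l(j\rightarrow k)$, $h$ has strictly higher priority at that school than $j$, so in particular $h\neq j$. Thus $j$ violates $h$'s priority under $\mu^\Pi$. But $h\in\mathcal{I}(P)\setminus\BB(\mu^\Pi)$ by the definition of labels, so this violation is unjustifiable, a contradiction. Hence $l(\Pi)\subseteq\BB(\mu^\Pi)$.

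The main point requiring care is Step 1: we must rule out violations caused by students outside the cycles, and confirm that a violated student actually lies in the relevant in-neighbourhood $N^-(k)$. Both facts follow from DA stability together with the weak improvement of every student under $\mu^\Pi$. No other earlier result is needed beyond stability of $\da(P)$ and the definitions; Lemma \ref{lem:improvable_cycle} is not required.
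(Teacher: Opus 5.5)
Your proof is correct and follows the same route as the paper's. For the forward direction, you take an edge $j\to k$ of $\Pi$ whose label contains an improvable non-beneficiary $h$, and observe that $h$ keeps her DA school and so is violated by $j$. For the converse, you place every violated improvable student in the label of the edge used by the violator. Your Step 1 adds something the paper lacks: it uses stability of DA to rule out violations caused by students who keep their DA seat, whereas the paper's converse simply assumes the violating assignment comes from an edge of $\Pi$.
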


\begin{proof}
	Suppose first that $\mu^\Pi$ is justifiable, and let $h\in l(\Pi)$.
	Then $h\in l(i\to j)$ for some edge $i\to j$ in $\Pi$.
	Thus $h$ is improvable, prefers $\da_j(P)$ to $\da_h(P)$, and has higher priority at $\da_j(P)$ than $i$.
	If $h\notin \BB(\mu^\Pi)$, then $h$ does not belong to any cycle in $\Pi$, so $\mu^\Pi_h=\da_h(P)$.
	Hence assigning $i$ to $\da_j(P)$ creates a priority violation against $h$ under $\mu^\Pi$.
	Since $h$ is improvable and not a beneficiary, this violation is unjustifiable, contradicting that $\mu^\Pi$ is justifiable.
	Therefore $h\in \BB(\mu^\Pi)$, and so $l(\Pi)\subseteq \BB(\mu^\Pi)$.
	
Conversely, suppose that $l(\Pi)\subseteq \BB(\mu^\Pi)$.
Consider 
$\mu^\Pi$ against some student $h$ at a school $\da_j(P)$ assigned to student $i$ via an edge $i\to j$ in $\Pi$.
If $h\in\UU(P)$ the violation is justifiable by definition, so suppose $h\in\IR(P)$.
Then $h$ prefers $\da_j(P)$ to $\mu^\Pi_h$, and has higher priority at $\da_j(P)$ than $i$.
Since $\mu^\Pi$ Pareto-dominates $\da(P)$, we have $\mu^\Pi_h \succeq_h \da_h(P)$, so in particular $h$ prefers $\da_j(P)$ to $\da_h(P)$.
Thus $h\in l(i\to j)\subseteq l(\Pi)$.
By assumption, $h\in \BB(\mu^\Pi)$.
Hence every priority violation under $\mu^\Pi$ is justifiable, and therefore $\mu^\Pi$ is justifiable.

\end{proof}

In particular, note that a matching $\mu^\Pi$ is justifiable if the following local condition on edges holds: $l(\Pi)=\emptyset$. In such case, we will say that the corresponding matching is \emph{strongly justifiable}. Strongly justifiable matchings have minimal priority violations, but are of interest for a technical reason too. To verify that a matching is justifiable, one must verify each label in a cycle against every node who is part of the cycle packing. On the other hand, this is much easier in a strongly justifiable matching: we only need to check that each label in a cycle is $\emptyset$, and thus the justifiability property can be verified locally rather than globally.

%%%%%%%%%%%%%%%%%%%%%%%%
\section{Results}
\label{sec:results}
%%%%%%%%%%%%%%%%%%%%%%%%

\subsection{Existence and Structure}

We first show that our notion of justifiability is not vacuous, and gives us at least one Pareto-improvement over DA whenever DA's matching is inefficient. To do so, we describe a simple mechanism that constructs such an improvement using the set of schools who reject at least one improvable student.\\

We will use the following notation. Let $S^*(P)\subseteq S$ denote the set of schools that reject at least one improvable student during the execution of DA.
For each school $s\in S^*(P)$, the \emph{cutoff student} at $s$, denoted by  $\underline{i}(s)$, is the lowest-priority student assigned to $s$ in DA.
Similarly, for each $s\in S^*(P)$, the \emph{below cutoff set} of $s$ is
\[
A(s)\coloneqq \{i\in \mathcal{I}(P):\ s\succ_i \da_i(P)\ \text{and}\ \underline{i}(s)\triangleright_s i\},
\]
i.e., the improvable students who prefer $s$ to their DA assignment but have lower priority at $s$ than the cutoff.
Note that, since $s\in S^*(P)$, the set $A(s)$ is non-empty.
Thus, for any school $s\in S^*(P)$, we can find the \emph{just below cutoff student} at $s$, denoted $i_s^*$, which is the highest-priority student in $A(s)$ according to $\triangleright_s$.

\noindent {\it The Just-Below Cutoffs (JBC) mechanism.}

\begin{enumerate}
\item Construct a directed graph on $S^*(P)$ by adding, for each $s\in S^*(P)$, a directed edge $s\rightarrow \da_{i_s^*}(P)$ (recall $i_s^*$ is the just below cutoff student).
\item Since every node has out-degree exactly one, the digraph contains at least one cycle. For every such cycle $(s_1\rightarrow s_2\rightarrow\cdots\rightarrow s_k\rightarrow s_1)$, 
execute the corresponding trades simultaneously: each student $i_{s_\ell}^*$ moves from $s_{\ell+1}$ to $s_\ell$ (indices mod $k$).
\end{enumerate}

We use $\jbc(P)$ to denote the matching produced by JBC. We show below that JBC always produces a strongly justifiable improvement whenever DA is Pareto-inefficient.

\begin{theorem}
	\label{thm:existence}
	Every problem with an inefficient DA outcome admits a strongly justifiable improvement. 
	
	Moreover, the set of strongly justifiable matchings coincides with the set of matchings obtained by executing subsets of the disjoint cycles found by JBC.
\end{theorem}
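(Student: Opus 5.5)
The plan is to prove three things in order: that the JBC digraph is well defined and nonempty whenever $\da(P)$ is inefficient, that each of its cycles yields a strongly justifiable improvement, and that conversely every strongly justifiable matching arises this way. The tools are Lemma~\ref{lem:improvable_cycle}, Lemma~\ref{prop:cc_packing}, and stability of $\da(P)$. Stability gives the following fact, used repeatedly: if a student $h$ prefers $s$ to $\da_h(P)$, then $s$ rejected $h$ during DA, is full, and ranks $\underline{i}(s)\triangleright_s h$.

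\emph{Existence.} If $\da(P)$ is inefficient, some $\mu\in\mathcal{M}(P)$ exists, so $\mathcal{I}(P)\neq\emptyset$. By Lemma~\ref{lem:improvable_cycle}, an improvable $i$ lies on a cycle and therefore envies some $j$. By the stability fact, $\da_j(P)$ rejected $i$, so $S^*(P)\neq\emptyset$.

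Next I check that the map $s\mapsto \da_{i^*_s}(P)$ stays inside $S^*(P)$. The student $i^*_s$ is improvable, so it lies on a cycle and is envied by an improvable predecessor. That predecessor was rejected by $\da_{i^*_s}(P)$, which puts this school in $S^*(P)$. Hence the digraph on $S^*(P)$ has out-degree one and contains a cycle.

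Along a cycle $s_1\to\cdots\to s_k\to s_1$ the successors $s_{\ell+1}=\da_{i^*_{s_\ell}}(P)$ are distinct. Therefore the students $i^*_{s_\ell}$ are distinct, and the trade is a student cycle in $G^{\da(P)}$: $i^*_{s_\ell}$ envies $i^*_{s_{\ell-1}}$, who currently sits at $s_\ell$.

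\emph{Strong justifiability.} For an edge in which $i^*_s$ takes the seat at $s$ of some $j$, suppose $h\in l(\cdot)$. Then $h$ is improvable, prefers $s$ to $\da_h(P)$, and $h\triangleright_s i^*_s$. By the stability fact, $h$ is below the cutoff, so $h\in A(s)$, and $h$ has higher priority than $i^*_s$. This contradicts the choice of $i^*_s$ as the top of $A(s)$. Hence all labels are empty. The same holds for any union of disjoint JBC cycles, because labels are defined edge by edge relative to $\da(P)$.

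\emph{Converse.} Let $\mu^\Pi$ be strongly justifiable, so $l(\Pi)=\emptyset$. Take an edge $i\to j$ of $\Pi$ and set $s=\da_j(P)$. Since $i$ is on a cycle it is improvable, and by stability $i\in A(s)$, so $s\in S^*(P)$. Any $h\in A(s)$ with $h\triangleright_s i$ envies $s$, hence envies $j$, hence lies in $l(i\to j)$. Emptiness of the label therefore forces $i=i^*_s$.

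So every school receiving a student under $\Pi$ receives exactly $i^*_s$, who leaves $\da_{i^*_s}(P)$. It follows that each school receives at most one newcomer, and projecting each student cycle of $\Pi$ onto schools gives a cycle of the JBC digraph. Because out-degrees are one, the JBC cycles are vertex-disjoint, and $\Pi$ executes exactly a subset of them.

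\emph{Main obstacle.} I expect the delicate step to be this student-to-school projection in the many-to-one setting. I must make sure that a school cannot appear twice in one student cycle. It cannot: both visits would have to receive the same $i^*_s$, contradicting disjointness of the cycle's students. I also need the identity of the displaced $j$ within $s$ not to matter. It does not, because labels depend only on $\da_j(P)=s$.
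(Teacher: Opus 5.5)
Your proposal is correct and follows essentially the same route as the paper: $s\mapsto \da_{i^*_s}(P)$ is well defined because some improvable predecessor envies $i^*_s$, every JBC label is empty because $i^*_s$ is maximal in $A(s)$, and, for the converse, an empty label on an edge into $s$ forces the entrant to be $i^*_s$. Your explicit projection of each student cycle of $\Pi$ onto a cycle of the school graph, with the check that no school repeats, is if anything a slightly tighter version of the paper's ``repeat the argument along $f$'' step in the many-to-one case.
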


\begin{proof}
	Assume \(\da(P)\) is Pareto-inefficient. Then \(\mathcal I(P)\neq \emptyset\), and therefore \(S^*(P)\neq \emptyset\).
	For each \(s\in S^*(P)\), the set \(A(s)\) is non-empty by definition, so \(i_s^*\) is well-defined.
	
	We first show that the school graph used by JBC is well-defined.
	Fix \(s\in S^*(P)\).
	Since \(i_s^*\in \mathcal I(P)\), Lemma~\ref{lem:improvable_cycle} implies that \(i_s^*\) belongs to a cycle in \(G^{\da(P)}\).
	Let \(h\) be the predecessor of \(i_s^*\) on such a cycle.
	Then \(h\) prefers \(\da_{i_s^*}(P)\) to \(\da_h(P)\), so during the execution of DA student \(h\) must have applied to school \(\da_{i_s^*}(P)\) and been rejected there.
	Since \(h\) is improvable, it follows that \(\da_{i_s^*}(P)\in S^*(P)\).
	
	Thus the map
	\[
	f:S^*(P)\to S^*(P), \qquad f(s)\coloneqq \da_{i_s^*}(P)
	\]
	is well-defined.
	Because \(S^*(P)\) is finite and each school in \(S^*(P)\) points to exactly one school under \(f\), the graph generated by \(f\) contains disjoint cycles.
	JBC executes all such cycles simultaneously: for each cycle
	\[
	(s_1\rightarrow s_2\rightarrow \cdots \rightarrow s_k\rightarrow s_1),
	\]
	student \(i_{s_\ell}^*\) moves from \(s_{\ell+1}\) to \(s_\ell\) for every \(\ell=1,\dots,k\) (indices modulo \(k\)).
	This new matching is feasible because, along each cycle, every school loses exactly one student and gains exactly one student, and different cycles do not share any school.
	It is also Pareto-improving, since \(i_{s_\ell}^*\in A(s_\ell)\) implies that \(i_{s_\ell}^*\) strictly prefers \(s_\ell\) to her DA assignment \(s_{\ell+1}=\da_{i_{s_\ell}^*}(P)\), while all other students keep their DA assignments.
	
	We next show that the matching produced by JBC is strongly justifiable.
	Fix \(s\in S^*(P)\), and let \(j=\underline{\imath}(s)\) be the cutoff student at \(s\), i.e.\ the lowest-priority student assigned to \(s\) under DA.
	In the student envy digraph, student \(i_s^*\) envies \(j\) (since \(i_s^*\) prefers \(s\) to her DA assignment and \(\da_j(P)=s\)), so the JBC trade at school \(s\) corresponds to the edge \(i_s^*\to j\) in \(G^{\da(P)}\).
	The label of this edge is
	\[
	l(i_s^*\rightarrow j)=\{h\in \mathcal I(P)\cap N^-(j): \rk_s(h)<\rk_s(i_s^*)\}.
	\]
	We claim that \(\mathcal I(P)\cap N^-(j)= A(s)\).
	Every student in \(A(s)\) is improvable, prefers \(s=\da_j(P)\) to her DA assignment, and has lower priority at \(s\) than \(j\), so she envies \(j\) and belongs to \(\mathcal I(P)\cap N^-(j)\).
	Conversely, take any \(h\in \mathcal I(P)\cap N^-(j)\).
	Then \(h\) prefers \(s=\da_j(P)\) to \(\da_h(P)\).
	By stability of \(\da(P)\), school \(s\) is full and every student assigned to \(s\) has higher priority at \(s\) than \(h\); in particular \(\rk_s(h)>\rk_s(\underline{\imath}(s))\), so \(h\in A(s)\).
	Hence
	\[
	l(i_s^*\rightarrow j)=\{h\in A(s): \rk_s(h)<\rk_s(i_s^*)\}.
	\]
	Since \(i_s^*\) is the highest-priority student in \(A(s)\), this set is empty.
	So every edge used by JBC is empty-labelled, and the matching produced by JBC is strongly justifiable.
	This proves the first statement of the theorem.
	
	For the second statement, let \(\mu^\Pi\) be any strongly justifiable matching induced by a cycle packing \(\Pi\) in \(G^{\da(P)}\).
	Since \(\mu^\Pi\) is strongly justifiable, every edge in \(\Pi\) is empty-labelled.
	Take any school \(s\) that is entered by some student under \(\mu^\Pi\), and let \(i\) be the student who enters \(s\).
	Since \(\mu^\Pi\) is induced by a cycle packing in \(G^{\da(P)}\), there exists some student \(j\) assigned to \(s\) under DA such that \(i\to j\) is an edge of \(\Pi\).
	
	We claim that \(i=i_s^*\).
	First, \(i\in A(s)\): because \(i\to j\) is an edge of \(G^{\da(P)}\), student \(i\) is improvable and prefers \(s=\da_j(P)\) to \(\da_i(P)\).
	Moreover, since \(j\) is assigned to \(s\) under DA and DA is stable, every student assigned to \(s\) under DA has higher priority at \(s\) than \(i\); in particular \(\underline{\imath}(s)\triangleright_s i\).
	Hence \(i\in A(s)\).
	
	Next, let \(h\in A(s)\).
	Then \(h\in \mathcal I(P)\), \(h\) prefers \(s\) to \(\da_h(P)\), and \(\underline{\imath}(s)\triangleright_s h\).
	Because \(j\) is assigned to \(s\) under DA, we also have \(j\triangleright_s \underline{\imath}(s)\), and therefore \(j\triangleright_s h\).
	Thus \(h\) envies \(j\), so \(h\in \mathcal I(P)\cap N^-(j)\).
	We have shown that
	\[
	A(s)\subseteq \mathcal I(P)\cap N^-(j).
	\]
	
	Since the edge \(i\to j\) is empty-labelled,
	\[
	l(i\to j)=\{h\in \mathcal I(P)\cap N^-(j): \rk_s(h)<\rk_s(i)\}=\emptyset.
	\]
	Therefore there is no student in \(A(s)\) with higher priority than \(i\) at \(s\).
	Since \(i\in A(s)\), it follows that \(i\) is the highest-priority student in \(A(s)\), namely \(i_s^*\).
	So in any strongly justifiable matching, the student who enters school \(s\) must be \(i_s^*\).
	
	Now suppose \(i_s^*\) enters \(s\) under \(\mu^\Pi\).
	Then \(i_s^*\) leaves her DA school \(\da_{i_s^*}(P)=f(s)\).
	Since \(\mu^\Pi\) is obtained from a cycle packing, the seat vacated at \(f(s)\) must in turn be filled by some student who enters \(f(s)\) along an empty-labelled edge.
	By the previous paragraph, that student must be \(i_{f(s)}^*\).
	Repeating the same argument, whenever a school \(s\) is used in a strongly justifiable cycle packing, the school \(f(s)\) must also be used.
	Since only finitely many schools are involved, this process must run along a cycle of the graph generated by \(f\).
	It follows that every strongly justifiable cycle packing is obtained by selecting some of the cycles found by JBC and executing the corresponding trades.
	Conversely, any subset of the cycles found by JBC yields a cycle packing all of whose edges are empty-labelled, and hence induces a strongly justifiable matching.
	Therefore the set of strongly justifiable matchings coincides with the set of matchings obtained by executing subsets of the disjoint cycles found by JBC.
\end{proof}

A direct implication of Theorem~\ref{thm:existence} is that every strongly justifiable matching is obtained by executing a subset of the disjoint cycles found by JBC. Because these cycles do not share any school, one strongly justifiable matching Pareto-dominates another if and only if it executes a superset of the latter’s cycles. Thus the set of strongly justifiable matchings forms a lattice under the Pareto-dominance order, with JBC as the unique maximal element and \(\da(P)\) as the minimal element.

We now illustrate how JBC operates in Example~\ref{ex:running}.

\begin{example}[Problem (left) and labelled envy digraph (right)]
\label{ex:running}
\input{exampleone}

\end{example}

For each school $s\in S^*(P)$, JBC identifies the highest-priority student $i_s^*$ who prefers $s$ to her DA assignment. 
This defines the directed school graph in Figure~\ref{fig:tpcexample1}. 
The unique cycle $(s_1\rightarrow s_5\rightarrow s_4\rightarrow s_1)$ yields the student exchange
\[
(i_1 \stackrel{\emptyset}{\rightarrow} i_4 \stackrel{\emptyset}{\rightarrow} i_5 \stackrel{\emptyset}{\rightarrow} i_1).
\]

\begin{figure}[H]
\begin{tikzpicture}[scale=0.4, ->, node distance=2.5cm, every node/.style={draw, circle, minimum size=.7cm}, thick]
	\def\radius{3}
	
	% Define nodes in a circular layout
	\foreach \i/\name in {1/s_1, 2/s_2, 3/s_3, 4/s_4, 5/s_5, 6/s_6} {
		\node[circle, draw] (\name) at ({360/6 * (\i-1)}:\radius) {$\name$}; 
	}
	
	% Color nodes 1-6 with red borders
	
	\draw[ultra thick] (s_1) to (s_5);
	\draw[thick] (s_2) to (s_1);
	\draw[thick] (s_3) to (s_5);
	\draw[ultra thick] (s_4) to (s_1);
	\draw[ultra thick] (s_5) to (s_4);
	\draw[thick] (s_6) to (s_3);
\end{tikzpicture}
\caption{School graph induced by JBC in Example~\ref{ex:running}. Thick edges form the JBC cycle.}
\label{fig:tpcexample1}
\end{figure}
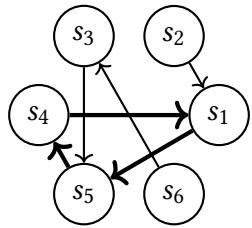

\begin{table}[h!]
\centering
\caption{Cycle Packings in Example~\ref{ex:running}}
\label{tab:improvements_example1}
\small
\begin{tabular}{llll}
	\toprule
	\textbf{Cycle Packing} & \textbf{SJ?} & \textbf{J?} & \textbf{PE?} \\
	\midrule
	$(i_1 \stackrel{\emptyset}{\rightarrow} i_2 \stackrel{i_5}{\rightarrow} i_1)$ & No & No & No\\
	$(i_1 \stackrel{i_4}{\rightarrow} i_5 \stackrel{\emptyset}{\rightarrow} i_1)$ & No & No & No\\
	$(i_4 \stackrel{\emptyset}{\rightarrow} i_5 \stackrel{i_1,i_6}{\rightarrow} i_4)$ & No & No & No\\
	$(i_1 \stackrel{i_3, i_5}{\rightarrow} i_6 \stackrel{i_1}{\rightarrow} i_4 \stackrel{\emptyset}{\rightarrow} i_5 \stackrel{\emptyset}{\rightarrow} i_1)$ & No & No & Yes\\
	$(i_3 \stackrel{\emptyset}{\rightarrow} i_6 \stackrel{i_1}{\rightarrow} i_4 \stackrel{\emptyset}{\rightarrow} i_5 \stackrel{\emptyset}{\rightarrow} i_3)$ & No & No & No\\
	$(i_1 \stackrel{\emptyset}{\rightarrow} i_4 \stackrel{\emptyset}{\rightarrow} i_5 \stackrel{\emptyset}{\rightarrow} i_1)$ & Yes & Yes & No\\
	$(i_1 \stackrel{i_5}{\rightarrow} i_3 \stackrel{\emptyset}{\rightarrow} i_6 \stackrel{i_1}{\rightarrow} i_4 \stackrel{\emptyset}{\rightarrow} i_5 \stackrel{\emptyset}{\rightarrow} i_1)$ & No & Yes & No\\
	$\{(i_1 \stackrel{\emptyset}{\rightarrow} i_2 \stackrel{i_5}{\rightarrow} i_1),(i_3 \stackrel{\emptyset}{\rightarrow} i_6 \stackrel{i_1}{\rightarrow} i_4 \stackrel{\emptyset}{\rightarrow} i_5 \stackrel{\emptyset}{\rightarrow} i_3)\}$ & No & Yes & Yes\\
	\bottomrule
	\scriptsize SJ: strongly justifiable; J: justifiable; PE: Pareto-efficient.
\end{tabular}
\end{table}
\iffalse
\input{difference}
\fi
Example~\ref{ex:running} illustrates that JBC selects a strongly justifiable improvement but the resulting matching need not be Pareto-efficient. At the same time, justifiability allows for further efficiency gains once we allow for edges that are not necessarily empty-labelled (Table~\ref{tab:improvements_example1}).\footnote{By this statement we do not mean that every JBC beneficiary improves upon her DA placement in every justifiable matching; see Example 4 in Appendix \ref{app:justifiable_not_core} for a counterexample. This is why the set of justifiable matchings does not have any nice lattice structure. What we mean is that our main algorithm will build on JBC to generate justifiable matchings.} In particular, in our Example we can find a cycle packing that is both justifiable and Pareto-efficient. However, this fortunate coincidence does not hold in general, as we show below.

\begin{proposition}
\label{prop:efficiency}
There exists a problem $P$ that admits no matching that is both justifiable and Pareto-efficient.
\end{proposition}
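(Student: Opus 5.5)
The plan is to build an explicit small counterexample and verify it with Lemma~\ref{prop:cc_packing}. Since justifiability is only defined on $\mathcal{M}(P)$, it suffices to exhibit a problem where every Pareto-efficient matching that Pareto-dominates $\da(P)$ is unjustifiable. A Pareto-efficient element of $\mathcal{M}(P)$ is exactly a maximal cycle packing of $G^{\da(P)}$, meaning one whose induced matching admits no further improvement. So the task is combinatorial: design $G^{\da(P)}$ with its labels so that every maximal packing $\Pi$ has $l(\Pi)\not\subseteq \BB(\mu^\Pi)$. The Pareto-efficient matchings outside $\mathcal{M}(P)$ are irrelevant, because they are not candidates for justifiability.

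The structural idea is to have overlapping improvement cycles that compete for a common student or school. Then any maximal packing must leave out some improvable student $h$ lying on a discarded cycle. Priorities are set so that $h$ ranks above the entering student at the school she envies, which puts $h$ in the label of an edge used by the packing. As a template, take $i_1,i_2,i_3$ with DA assignments $s_1,s_2,s_3$, and two $2$-cycles $(i_1,i_2)$ and $(i_1,i_3)$ sharing $i_1$. If $(i_1,i_2)$ is executed, $i_3$ should envy $i_2$ and rank above $i_1$ at $s_2$, so that $i_3\in l(i_1\to i_2)$. Symmetrically, if $(i_1,i_3)$ is executed, $i_2$ should rank above $i_1$ at $s_3$. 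Any further cycles this creates must also be checked: for example $(i_2,i_3)$, or a $3$-cycle through $s_1$. For $(i_2,i_3)$ I would arrange that $i_1$ envies $s_2$ or $s_3$ and ranks above the entrant there. I would then check the $3$-cycles, and add a fourth student or school if needed to break unwanted justifiable maximal packings.

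The verification steps, in order, are as follows. First, specify preferences, priorities and unit quotas, and run DA to confirm the intended DA matching; in particular each rejection must be consistent with the stated priorities, so that DA is stable. Second, write down $G^{\da(P)}$, use Lemma~\ref{lem:improvable_cycle} to identify $\mathcal{I}(P)$, and compute all edge labels. Third, enumerate all cycle packings, which is a short list for three or four students, and mark the Pareto-efficient ones. Fourth, for each efficient packing, point to a label element that is not a beneficiary. By Lemma~\ref{prop:cc_packing} this packing is then unjustifiable, which proves the proposition.

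The main obstacle is the tension between DA stability and the required labels. A student $h$ who envies $\da_j(P)$ must rank below everyone assigned there under DA, yet she must rank above the entering student. Overlapping cycles also tend to generate extra cycles, and an extra cycle may itself be efficient and justifiable; Table~\ref{tab:improvements_example1} shows exactly such a cycle rescuing Example~\ref{ex:running}. To handle this, I would try symmetric ``rock--paper--scissors'' priorities first, so that each potential rescuing cycle is blocked by a violated non-beneficiary. If that fails, I would adapt Example~\ref{ex:running} by altering one priority so that its justifiable efficient packing acquires a non-beneficiary label.
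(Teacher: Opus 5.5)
\textbf{Gap: no instance is given.} Proposition~\ref{prop:efficiency} is purely existential, so its proof consists of a concrete problem $P$ together with the verification you list. What you give instead is a template and a search plan (``I would try\ldots'', ``if that fails\ldots''). Your intended method (explicit instance, enumeration of cycle packings, Lemma~\ref{prop:cc_packing}) is the paper's, but nothing has been proved yet.

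\textbf{The template cannot be completed.} To put $i_3\in l(i_1\to i_2)$ and $i_2\in l(i_1\to i_3)$ you need the edges $i_3\to i_2$ and $i_2\to i_3$, and the priorities $i_3\triangleright_{s_2} i_1$ and $i_2\triangleright_{s_3} i_1$. Among your three students, the only candidate for $l(i_2\to i_3)$ or $l(i_3\to i_2)$ is $i_1$. By those very priorities, $i_1$ ranks below the entrant at both $s_3$ and $s_2$. So $(i_2,i_3)$ is empty-labelled, and the fix you propose for it is impossible. Worse, the envy digraph is now complete on $\{i_1,i_2,i_3\}$, so it has $3$-cycles covering every improvable student.

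\textbf{That last point is the missing idea.} Suppose some $\mu\in\mathcal{M}(P)$ has $\BB(\mu)=\mathcal{I}(P)$. Take any Pareto-efficient $\mu'$ that weakly Pareto-dominates $\mu$; one exists by finiteness. Then $\mu'\in\mathcal{M}(P)$ and $\BB(\mu')=\mathcal{I}(P)$. So every student lies in $\UU(P)\cup\BB(\mu')$, and $\mu'$ is justifiable by definition. A counterexample therefore needs an envy structure in which no improvement over $\da(P)$ benefits all improvable students.

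\textbf{Consequences for your plan.} This rules out any instance with only three improvable students. A $3$-cycle would cover $\mathcal{I}(P)$. Absent one, the only cycles are $(a,b)$ and $(a,c)$, with no edge between $b$ and $c$. Then $l(a\to b)=l(a\to c)=\emptyset$, and the priority order of $b,c$ at $\da_a(P)$ empties one of $l(b\to a)$ and $l(c\to a)$. That swap is justifiable and already Pareto-efficient. The same obstruction defeats your fallback. The justifiable efficient packing in Example~\ref{ex:running} covers all of $\mathcal{I}(P)=\{i_1,\dots,i_6\}$. Hence no priority change that keeps $\da(P)$ fixed can give it a non-beneficiary label.

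\textbf{How the paper gets around this.} Example~\ref{ex:noeff} is built around exactly this obstruction. It has five improvable students, and all four cycles pass through $i_2$, so packings are singletons. No cycle contains both $i_5$ and $i_6$, and only $(i_1\to i_4\to i_2\to i_1)$ is justifiable. The $i_1$--$i_5$ swap that would improve its outcome amounts to the $4$-cycle, whose label contains the excluded $i_6$.
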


\begin{proof}
Consider the school choice problem in the Example \ref{ex:noeff} below, with six students and six schools with unit capacity. The DA outcome assigns each student to the school shown in bold. 
\begin{example}[Preferences and priorities (left); envy digraph (right)]\label{ex:noeff}
	\centering
	
	\begin{minipage}[c]{0.50\textwidth}
		\centering
		\scalebox{0.85}{\begin{tabular}{cccccc|cccccc}
				\hline
				$i_1$ & $i_2$ & $i_3$ & $i_4$ & $i_5$ & $i_6$ 							& $s_{1}$ & $s_2$ & $s_{3}$ & $s_{4}$ & $s_{5}$ & $s_{6}$ \\
				\hline
				$s_{5}$ & $s_{6}$ & $s_{1}$ & $s_2$ & $s_{4}$ & $s_2$ 				& \bm{$i_1$} & \bm{$i_2$} 	& $\cdot$ 	& \bm{$i_4$} & \bm{$i_5$} & \bm{$i_6$} \\
				$s_{4}$ & $s_{1}$ & $s_2$ & \bm{$s_{4}$} & \bm{$s_{5}$} & $s_{4}$ 	& $i_3$ & $i_4$ 			& $\cdot$ 	& $i_3$ & $i_1$ & $i_2$ \\
				\bm{$s_{1}$} & \bm{$s_2$} & $s_{6}$ & {} & {} & \bm{$s_{6}$} 		& $i_2$ & $i_3$ 			& $\cdot$ 	& $i_1$ & 		&  \\
				{} & {} & $s_{4}$ & {} & {} & {}					 					& 		& $i_6$ 			&  			& $i_6$ & 		& \\
				{} & {} & \bm{$s_{3}$} & {} & {} & {} 									& 		& 					&  			& $i_5$ & 		&  \\
		\end{tabular}}
	\end{minipage}%
	\hfill
	\begin{minipage}[c]{0.4\textwidth}
		\centering
		\input{exampleeff}
	\end{minipage}
\end{example}

\noindent {\bf Step 1} [Only justified improvement].
The labelled DA envy digraph contains exactly four cycles, and all four contain student $i_2$. Hence no two cycles are disjoint, so the only non-empty cycle packings are the singleton packings generated by these four cycles.
Among them, the only cycle whose traded edges are justifiable is
\[
(i_1 \stackrel{\emptyset}{\rightarrow} i_4 \stackrel{\emptyset}{\rightarrow} i_2 \stackrel{\emptyset}{\rightarrow} i_1).
\]
Indeed, each traded edge on this cycle has empty label, so executing it violates no improvable student's priority, and therefore the induced improvement is (strongly) justifiable.

\noindent By contrast, each of the other three cycles contains at least one traded edge whose label includes an improvable student who is not a beneficiary of that cycle, and thus fails justifiability:
\begin{itemize}
	\item The cycle $(i_2 \stackrel{\emptyset}{\rightarrow} i_6 \stackrel{i_4}{\rightarrow} i_2)$ is not justifiable because the edge $i_6 \rightarrow i_2$ has label $\{i_4\}$, so executing the cycle violates the priority of $i_4$, who is not improved by that cycle.
	\item The cycle $(i_2 \stackrel{\emptyset}{\rightarrow} i_6 \stackrel{i_1}{\rightarrow} i_4 \stackrel{\emptyset}{\rightarrow} i_2)$ is not justifiable because the edge $i_6 \rightarrow i_4$ has label $\{i_1\}$, so it violates the priority of $i_1$, who is not improved by that cycle.
	\item The cycle $(i_2 \stackrel{\emptyset}{\rightarrow} i_1 \stackrel{\emptyset}{\rightarrow} i_5 \stackrel{i_1,i_6}{\rightarrow} i_4 \stackrel{\emptyset}{\rightarrow} i_2)$ is not justifiable because the edge $i_5 \rightarrow i_4$ has label $\{i_6\}$ (in particular), so executing the cycle violates the priority of $i_6$, who is not improved by that cycle.
\end{itemize}
Consequently, the unique justifiable Pareto improvement over $\da(P)$ in Example~\ref{ex:noeff} is the one induced by the $3$-cycle above; denote the resulting matching by $\mu^{J}$.

\noindent {\bf Step 2} [The justifiable improvement is not efficient].
We claim that $\mu^{J}$ is not Pareto-efficient.
After executing the $3$-cycle, students $i_1$ and $i_5$ both strictly prefer to exchange their assigned schools (as is immediate from the preferences in Example~\ref{ex:noeff}), so there exists a further Pareto improvement over $\mu^{J}$ that assigns $i_1$ to $\mu^{J}(i_5)$ and $i_5$ to $\mu^{J}(i_1)$, leaving all other students unchanged.
However, any matching that implements this exchange violates the priority of student $i_6$ at school $s_1$ (the school assigned to one of $\{i_1,i_5\}$ under $\mu^{J}$), and $i_6$ is an improvable non-beneficiary of the exchange.
Hence this further improvement cannot be justifiable.

\noindent Since $\mu^{J}$ is the unique justifiable improvement over $\da(P)$ in Example~\ref{ex:noeff}, and $\mu^{J}$ is not Pareto-efficient, it follows that Example~\ref{ex:noeff} admits no matching that is both justifiable and Pareto-efficient.
\end{proof}

Proposition~\ref{prop:efficiency} shows that justifiability and full Pareto-efficiency are generally incompatible.
When restricted to justifiable improvements, therefore, we must settle for constrained-efficiency.
In the next subsection, we introduce a JBC-related algorithm, SJBC+, that always returns a justifiable matching that satisfies one such notion.

%%%%%%%%%%%%%%%%%%%%%%%%%%%%%%%%%%%%%%%%%%%%%%%%%%%%%%%%%%%%%%

\subsection{The Sequential Just Below Cutoffs Algorithm}
\label{sec:stpcplus}

JBC allows only empty-labelled edges, so only unimprovable students' priorities may be violated. 
But justifiability allows more: a priority violation against an improvable student is legitimate whenever that student benefits from the realized improvement. 
This opens the door to a richer class of improvements through sequential iteration. 
JBC produces an initial set of beneficiaries $B^1$. In the next round, every edge whose label is contained in $B^1$ becomes admissible. Executing such edges may create new beneficiaries, which enlarges the admissible set further. Iterating this procedure yields an expansion phase that stops when no additional beneficiary can be created.

To find one of the largest cycle packings at each iteration, we represent the problem as a bipartite graph $H^t$. 
Both sides contain one node per improvable student. 
For each edge $i\to j$ in $G^{\da(P)}$ with $l(i\to j)\subseteq B^t$, we add an edge from $i$ (left) to $j$ (right). 
We also add a self-loop \(i\to i\) for every improvable student, representing remaining at the DA assignment.
A perfect matching in \(H^t\) then corresponds to a cycle packing: matched pairs \(i\to j\) with \(i\neq j\) form trading cycles, while self-loops represent uncovered students.
Hence the relevant objective is not the cardinality of a perfect matching in \(H^t\)—all perfect matchings have the same size—but rather the number of non-loop edges it contains, which is exactly the number of beneficiaries created at that iteration.

Equivalently, let \(\widehat H^t\) denote the bipartite graph obtained from \(H^t\) by deleting all self-loops.
A matching in \(\widehat H^t\) specifies the students who trade at step \(t\), and every such matching can be completed uniquely to a perfect matching in \(H^t\) by assigning each unmatched student to herself.
Therefore maximizing the number of non-loop edges in a perfect matching of \(H^t\) is equivalent to computing a maximum-cardinality matching in \(\widehat H^t\).\footnote{This is a standard bipartite matching problem, solvable for example by the Hopcroft--Karp algorithm; see \citet[Chapter 26]{cormen2009}.}

Our proposed algorithm iterates this maximum-cardinality matching step on \(\widehat H^t\), and then adds self-loops for any uncovered students.

\noindent{\it The Sequential Just Below Cutoffs (SJBC+) algorithm.}
\begin{enumerate}
	\item Set \(\mu^0=\jbc(P)\) and \(B^1=\BB(\mu^0)\).
	
	Build the bipartite graph \(\widehat H^0\) whose left and right vertex sets are both copies of \(\mathcal I(P)\), and whose edges are all pairs \(i\to j\) in \(G^{\da(P)}\) with \(l(i\to j)=\emptyset\).
	Let \(M^0\) be the unique maximum-cardinality matching in \(\widehat H^0\) corresponding to \(\mu^0\).\footnote{Uniqueness in this first iteration follows from Theorem~\ref{thm:existence}.}
	Complete \(M^0\) to a perfect matching in the augmented graph \(H^0\) by adding a self-loop \(i\to i\) for every uncovered student \(i\in\mathcal I(P)\).
	
	\item For each \(t=1,2,\dots\), given \(B^t\), build the bipartite graph \(\widehat H^t\) whose left and right vertex sets are both copies of \(\mathcal I(P)\), and whose edges are all pairs \(i\to j\) in \(G^{\da(P)}\) satisfying \(l(i\to j)\subseteq B^t\).
	
	Start from the non-loop matching induced by the previous iteration.
	Using augmenting paths in \(\widehat H^t\), compute a maximum-cardinality matching \(M^t\) in \(\widehat H^t\).
	Complete \(M^t\) to a perfect matching in the augmented graph \(H^t\) by adding a self-loop \(i\to i\) for every uncovered student \(i\in\mathcal I(P)\).
	
	Let \(\mu^t\) be the matching induced by this perfect matching, and set
	\[
	B^{t+1}=\BB(\mu^t).
	\]
	
	If \(B^{t+1}=B^t\), stop the expansion phase and let \(\mu^*=\mu^t\).
	Otherwise continue to iteration \(t+1\).
	
	\item Starting from \(\mu^*\), run the refinement phase. At each refinement iteration, call an edge \(i\to j\) in the envy digraph at the current matching \emph{admissible} if every improvable student \(h\neq j\) who prefers the school currently assigned to \(j\) to \(\da_h(P)\) and has higher priority than \(i\) at that school belongs to \(B^*\). Here \(B^*=\BB(\mu^*)\). Repeatedly find any directed cycle consisting only of admissible edges and execute it. Continue until no such cycle remains.
	
	\item Output the final matching \(\mu^+\).
\end{enumerate}

We call any matching produced by this procedure an SJBC+ outcome.\footnote{See Appendix~\ref{app:why-plus} for an example showing why the refinement step is necessary.}
The outcome need not be unique: different selections of maximum matchings in the expansion phase or different collections of cycles in the refinement phase may yield different matchings.
In the expansion phase we compute a maximum-cardinality matching in the bipartite graph of admissible non-loop edges at each iteration, and then complete uncovered students with self-loops. Equivalently, in the augmented graph with self-loops, we select a perfect matching that minimizes the number of self-loops.

We illustrate the execution of SJBC+ on Example~\ref{ex:running} (for a graphical version see Figure \ref{fig:bipartite_expansion}). 
Step~1 yields the JBC cycle $(i_1\stackrel{\emptyset}{\rightarrow} i_4 \stackrel{\emptyset}{\rightarrow} 
i_5 \stackrel{\emptyset}{\rightarrow} i_1)$ and $B^1=\{i_1,i_4,i_5\}$, 
with self-loops on $i_2$, $i_3$ and $i_6$. 
At $t=1$, the newly added edges with labels contained in $B^1$ are: $i_1 \stackrel{i_5}{\rightarrow} i_3$, $i_1 \stackrel{i_4}{\rightarrow} i_5$, $i_2 \stackrel{i_5}{\rightarrow} i_1$, 
and $i_6 \stackrel{i_1}{\rightarrow} i_4$. 
An augmenting path, which alternates between edges not in the current matching and edges in the current matching, starts at an unmatched vertex on the left-hand side of $\widehat H^t$ (equivalently, a student currently assigned to a self-loop in the augmented graph $H^t$), namely $i_2^L$ (the superscript L denotes the LHS), who is now allowed to traverse to $i_1^R$.
The augmenting path alternates:
\begin{enumerate}
	\item $i_2^L \to i_1^R$ (unmatched, newly available),
	\item $i_1^R \to i_5^L$ (matched),
	\item $i_5^L \to i_3^R$ (unmatched),
	\item $i_3^R \to i_3^L$ (matched, self-loop),
	\item $i_3^L \to i_6^R$ (unmatched),
	\item $i_6^R \to i_6^L$ (matched, self-loop),
	\item $i_6^L \to i_4^R$ (unmatched, newly available),
	\item $i_4^R \to i_1^L$ (matched),
	\item $i_1^L \to i_2^R$ (unmatched),
	\item $i_2^R \to i_2^L$ (matched, self-loop).
\end{enumerate}
The path returns to $i_2$, where it started. Then we flip, so that we preserve only new edges, and execute the corresponding cycle:
\[
(i_1\stackrel{\emptyset}{\rightarrow} i_2 \stackrel{i_5}{\rightarrow} 
i_1),\; 
(i_3\stackrel{\emptyset}{\rightarrow} i_6 \stackrel{i_1}{\rightarrow} 
i_4 \stackrel{\emptyset}{\rightarrow} i_5 \stackrel{\emptyset}{\rightarrow} 
i_3)
\]
covering all six improvable students. Thus $B^* = \{i_1,i_2,i_3,i_4,i_5,i_6\}$, the unique justifiable and Pareto-efficient 
matching from Table~\ref{tab:improvements_example1}.

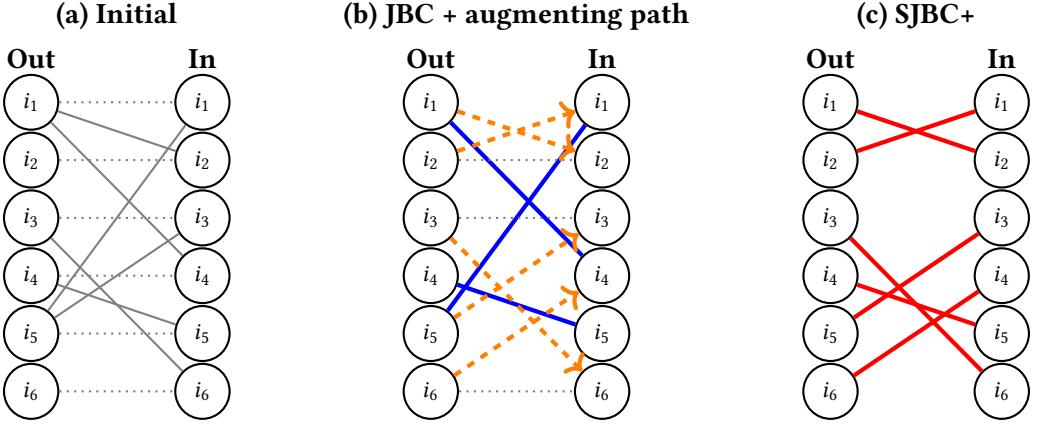
\begin{figure}[ht]
	\centering
	\input{aug_paths}
	\caption{Expansion phase in Example~\ref{ex:running}. Panel~(a): after 
		JBC, the JBC cycle (solid blue) covers $i_1$, $i_4$, $i_5$; remaining 
		students are on self-loops (dotted). Panel~(b): an augmenting path 
		(dashed orange) reroutes through the existing matching, incorporating 
		$i_2$, $i_3$, $i_6$. Panel~(c): after flipping, all six students are on 
		non-trivial cycles (red).}
	\label{fig:bipartite_expansion}
\end{figure}

Any SJBC+ outcome satisfies several important desiderata.

\begin{theorem}\label{thm:stpcplus}
	If $\da(P)$ is inefficient, any SJBC+ outcome $\mu^+$:
	\begin{enumerate}
		\item Pareto-dominates $\da(P)$;
		\item is justifiable;
		\item is not Pareto-dominated by any justifiable matching $\mu'$ unless $\BB(\mu^+) \subsetneq \BB(\mu')$;
		\item satisfies $\BB(\jbc(P)) \subseteq \BB(\mu^+)$
	\end{enumerate}
	Furthermore, any SJBC+ outcome can be computed in time polynomial in $|I|$ and $|S|$.
\end{theorem}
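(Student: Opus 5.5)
The plan is to prove the four properties by tracking three invariants through the two phases of SJBC+. Throughout the expansion phase the current matching is induced by a cycle packing of $G^{\da(P)}$. The beneficiary sets $B^t$ are nested. Every edge used at iteration $t$ has label contained in $B^t$.

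For properties (1) and (4), a perfect matching of $H^t$ is a permutation of $\mathcal I(P)$ whose non-loop arcs are envy edges. It therefore decomposes into disjoint cycles of $G^{\da(P)}$, and executing these cycles Pareto-dominates $\da(P)$. The key observation is that augmenting paths never unmatch a left vertex: any vertex covered by $M^{t-1}$ remains covered by $M^t$. Hence $B^t=\BB(\mu^{t-1})\subseteq \BB(\mu^t)=B^{t+1}$, since a student is a beneficiary exactly when her left copy is matched by a non-loop edge. For this I also need the edges of $M^{t-1}$ to stay admissible in $\widehat H^t$, which holds because $B^{t-1}\subseteq B^t$. Starting from $M^0$, which corresponds to $\jbc(P)$, this gives $\BB(\jbc(P))\subseteq B^*$.

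In the refinement phase each executed cycle is an envy cycle at the current matching, so it is a Pareto improvement and no one loses. This yields (1) for $\mu^+$ and $B^*\subseteq \BB(\mu^+)$, which gives (4).

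For property (2) I will argue in two steps.
\begin{itemize}
\item At the end of the expansion phase, $B^{t+1}=B^t$. Every edge of $\mu^*$ satisfies $l(i\to j)\subseteq B^t=\BB(\mu^*)$, so $\mu^*$ is justifiable by Lemma~\ref{prop:cc_packing}.
\item In the refinement phase, Lemma~\ref{prop:cc_packing} does not apply directly, since $\mu^+$ is no longer a single packing from $\da$. I will instead check violations directly. Suppose $h$ is improvable, $h\notin\BB(\mu^+)$, and $h$ has a violation at school $s$ held by $i$. Then $\mu^+_h=\da_h(P)$ and $h$ prefers $s$ to $\da_h(P)$. If $i$ obtained $s$ through a refinement move, admissibility forces $h\in B^*\subseteq \BB(\mu^+)$, a contradiction. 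If $i$ obtained $s$ during the expansion phase, the expansion-phase argument applies.
\end{itemize}
The case where $s$ is later vacated and re-entered needs care, so the last entrant into $s$ is the one to examine.

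For the polynomial bound, I will use the following counts.
\begin{itemize}
\item The expansion phase runs at most $|\mathcal I(P)|$ iterations, since $B^t$ strictly grows until it stops.
\item Each iteration requires labels and one bipartite maximum matching, both polynomial.
\item Each refinement cycle strictly improves some students and makes no one worse off, so the total number of cycles is bounded by the total rank improvement, at most $|I|(|S|+1)$.
\item Each cycle is found by DFS.
\end{itemize}

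The main obstacle is property (3). Suppose $\mu'$ is justifiable, $\mu'$ Pareto-dominates $\mu^+$, and $\BB(\mu')=\BB(\mu^+)=:B$; the weak inclusion $\BB(\mu^+)\subseteq\BB(\mu')$ is automatic from domination. My plan is to show that $\mu'$ is reachable from $\mu^+$ by admissible refinement cycles, contradicting termination.

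Consider the set $D$ of students with $\mu'_i\succ_i\mu^+_i$. The difference between $\mu'$ and $\mu^+$ on $D$ decomposes into envy cycles at $\mu^+$: each $i\in D$ points to a holder under $\mu^+$ of $\mu'_i$. This is the standard argument that a Pareto-dominating reallocation with a full-capacity non-wasteful structure is a union of trading cycles, and it uses the rural-hospital-type fact that seat counts are preserved. I then need each such edge $i\to j$ to be admissible. Take any improvable $h$ who prefers $\mu^+_j=\mu'_i$ to $\da_h(P)$ and has priority above $i$ there. If $h\notin B$, then $\mu'_h=\da_h(P)$ and $\mu'$ violates $h$'s priority, contradicting the justifiability of $\mu'$. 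So $h\in\BB(\mu')=B$, and I must connect this to $B^*$.

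This is the delicate point. Admissibility refers to $B^*$, not to $\BB(\mu^+)$, and these may differ. I expect to need one of two routes. The first is to show that refinement adds no new beneficiaries, i.e.\ $\BB(\mu^+)=B^*$. I would argue this by maximality of the final expansion matching: a refinement cycle that brings a new student into $B$ would yield an augmenting path in $\widehat H^t$. Failing that, the second route is to reinterpret admissibility relative to $B^*$ and handle the gap separately. I would try the first route first.
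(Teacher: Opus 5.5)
Your arguments for properties (1), (2), (4) and for the running time follow the paper's proof. Your setup for (3) is also the paper's argument. Restricting the decomposition to the strict improvers $D$ is, if anything, slightly more careful than the paper's permutation of all of $B^*$, since it makes every edge a genuine envy edge at $\mu^+$.

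The gap is the step you leave open. Justifiability of $\mu'$ only gives $h\in\BB(\mu')=\BB(\mu^+)$, while admissibility requires $h\in B^*$. Without $\BB(\mu^+)=B^*$ the cycle need not be admissible, and there is no contradiction with termination. The paper obtains this equality by construction. Refinement cycles are executed only among students of $B^*$ (they are searched in the envy digraph restricted to $B^*$), and every participant strictly improves. Hence $\BB(\mu^+)=B^*$; the paper records this in its Part 2 and then uses it in Part 3.

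Your first route is sound, and it even shows the equality if refinement cycles could pass through students outside $B^*$. But it has to be carried out, as follows.

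\begin{enumerate}
\item Maintain the invariant that, at the current matching $\nu$, $\BB(\nu)=B^*$, and for every $i\in B^*$, every improvable $h$ who prefers $\nu_i$ to $\da_h(P)$ and has higher priority than $i$ at $\nu_i$ lies in $B^*$. This holds at $\mu^*$ because the edges of the final expansion matching have labels in $B^t=B^*$.
\item Execute an admissible cycle, producing $\nu''$. The same property then holds for every $i\in\BB(\nu'')$. For students who did not move it is the invariant. For movers it is admissibility; the excluded case $h=j$ is harmless because $j$ prefers $\nu_j$ to $\da_j(P)$ only if $j\in\BB(\nu)=B^*$.
\item Since $\nu''$ Pareto-dominates $\da(P)$, seat counts are preserved and non-beneficiaries keep their DA seats. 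So each $i\in\BB(\nu'')$ can be paired bijectively with a DA occupant $\pi(i)\in\BB(\nu'')$ of $\nu''_i$.
\item Labels depend only on $i$ and the school $\da_{\pi(i)}(P)$. Hence the pairs $i\to\pi(i)$ form a matching in the final $\widehat H^t$ with $|\BB(\nu'')|$ edges.
\item $M^t$ is maximum in $\widehat H^t$ and has $|B^*|$ edges, and $B^*\subseteq\BB(\nu'')$. Together these force $\BB(\nu'')=B^*$, and induction over refinement iterations gives $\BB(\mu^+)=B^*$.
\end{enumerate}
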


\begin{proof}
	\noindent [Part 1]
	By Theorem~\ref{thm:existence}, $\jbc(P)$ Pareto-dominates $\da(P)$.
	The expansion phase maintains $B^t \subseteq B^{t+1}$ and ensures every student in $B^t$ strictly prefers $\mu^t$ to $\da(P)$.
	At termination, every student in $B^* \supseteq \BB(\jbc(P))$ strictly prefers $\mu^*$ to DA.
	The refinement step permutes schools only among $B^*$ via cycles in which each participant improves over her current assignment, hence strictly improves over DA.
	Students outside $B^*$ retain their DA assignments throughout.
	Thus $\mu^+$ Pareto-dominates $\da(P)$.
	
	\noindent [Part 2] Every edge implemented during the expansion phase satisfies $l(i\to j)\subseteq B^t\subseteq B^*$ at the step in which it is used. Hence every priority violation created during the expansion phase is against a student in $B^*=\BB(\mu^*)$. By Lemma~\ref{prop:cc_packing}, $\mu^*$ is therefore justifiable. Since the refinement permutes assignments only among $B^*$ and each refinement-cycle participant strictly improves over her current school, $\BB(\mu^+)=B^*$.
	It suffices to show that no improvable non-beneficiary suffers a priority violation under $\mu^+$. Suppose for contradiction that $h\in\IR(P)\setminus B^*$ prefers school $s=\mu^+_i$ to $\mu^+_h=\da_h(P)$ and $\rk_s(h)<\rk_s(i)$. If $\mu^*_i=s$, the same violation exists under $\mu^*$, contradicting its justifiability. Otherwise $i$ received $s$ during the refinement phase: specifically, $i$ took the school of some student $j$ along a cycle of admissible edges. Because $h$ is improvable, $h\neq j$ (since $j\in B^*$), $h$ prefers $s$ to $\da_h(P)$, and $\rk_s(h)<\rk_s(i)$, admissibility of the edge $i\to j$ requires $h\in B^*$, a contradiction.
	
	\noindent [Part 3] Suppose for contradiction that there exists a justifiable matching $\mu'$ that Pareto-dominates $\mu^+$ but does not satisfy $\BB(\mu^+) \subsetneq \BB(\mu')$.
	
	Because $\mu'$ Pareto-dominates $\mu^+$, every student in $\BB(\mu^+)$ weakly prefers $\mu'$ to $\mu^+$, and since each such student strictly prefers $\mu^+$ to $\da(P)$, she also strictly prefers $\mu'$ to $\da(P)$. Hence $\BB(\mu^+) \subseteq \BB(\mu')$. By assumption this inclusion is not strict, so $\BB(\mu')=\BB(\mu^+)$. By Parts 1 and 4, $\BB(\mu^+)=B^*$, and therefore $\BB(\mu')=B^*$ as well.
	
	Since students outside $B^*$ receive their DA assignments under both $\mu^+$ and $\mu'$, the two matchings differ only in how they assign the schools held by students in $B^*$. Equivalently, $\mu'$ is obtained from $\mu^+$ by permuting the assignments of students in $B^*$. This permutation decomposes into disjoint cycles. Along each such cycle, every student receives under $\mu'$ the school held under $\mu^+$ by her successor. Since $\mu'$ Pareto-dominates $\mu^+$, every student on every cycle weakly prefers her successor's school to her own under $\mu^+$, and on at least one nontrivial cycle at least one student strictly prefers it.
	
	Fix such a cycle, and consider any edge $i\to j$ on it, so that under $\mu'$ student $i$ receives school $\mu^+_j$. Take any $h\in\IR(P)\setminus\{j\}$ who prefers $\mu^+_j$ to $\da_h(P)$ and satisfies $\rk_{\mu^+_j}(h)<\rk_{\mu^+_j}(i)$. If $h\notin B^*$, then $\BB(\mu')=B^*$ implies $\mu'_h=\da_h(P)$. But then $h$ prefers $\mu'_i=\mu^+_j$ to $\mu'_h$ and has higher priority than $i$ at that school, so $\mu'$ creates an actual priority violation against an improvable non-beneficiary, contradicting justifiability. Hence every such $h$ belongs to $B^*$, and therefore every edge on this cycle is admissible at $\mu^+$.
	
	Therefore this directed cycle would be available at termination, contradicting the stopping rule of the refinement phase. Hence no such $\mu'$ exists.
	
	\noindent [Part 4]
	The expansion phase initializes $B^1=\BB(\jbc(P))$ and maintains $B^t\subseteq B^{t+1}$.
	Since flipping along augmenting paths only extends the matching, every student in $B^1$ remains a beneficiary under $\mu^t$ for all $t \geq 1$, and in particular under $\mu^*$.
	The refinement step implements trades only among students in $B^*$, so students in $\BB(\jbc(P))\subseteq B^*$ remain beneficiaries under $\mu^+$.
	Hence $\BB(\jbc(P)) \subseteq \BB(\mu^+)$.
	
	\noindent [Part 5] DA runs in \(O(|I|\cdot |S|)\) time, since each student applies to each school at most once, so there are at most \(|I|\cdot |S|\) proposals in total, and each proposal is processed once.
	
	The labelled envy digraph has at most \(|I|(|I|-1)\) edges. For each edge \(i \to j\), its label is obtained by scanning the set of improvable students who prefer \(\da_j(P)\) to their DA assignment and checking which of them have higher priority than \(i\) at \(\da_j(P)\). Thus all labels can be computed in polynomial time.
	
	JBC constructs the auxiliary digraph on \(S^*(P)\) by assigning to each school \(s \in S^*(P)\) the unique outgoing edge to \(\da_{i_s^*}(P)\). Since this digraph has at most \(|S|\) nodes and out-degree one at every node, its directed cycles can be identified in \(O(|S|)\) time.
	
	The expansion phase runs for at most \(|I|\) iterations, since \(B^t\) strictly increases at each successful iteration. At each iteration, the expansion phase computes a maximum-cardinality matching in the bipartite graph \(\widehat H^t\) of admissible non-loop edges, where the two sides are copies of the improvable students. Hence \(|V(\widehat H^t)|\le 2|\mathcal I(P)|\) and \(|E(\widehat H^t)|\le |\mathcal I(P)|^2\), so each iteration can be solved in $O\!(|E(\widehat H^t)|\sqrt{|V(\widehat H^t)|})$
	time by the Hopcroft-Karp algorithm; see \citet[Chapter 26]{cormen2009}. Completing the resulting matching with self-loops is linear-time. Therefore the expansion phase runs in polynomial time.
	
	For the refinement phase, fix an iteration \(r\). The restricted envy digraph on \(B^*\) has at most \(|B^*|(|B^*|-1)\) edges. Admissibility of each edge \(i\to j\) is checked by scanning the students in \(\mathcal I(P)\setminus\{j\}\) and verifying preference and priority at school \(\mu^r_{\mathrm{ref},j}\). Hence the admissible subgraph can be computed in polynomial time.
	
	A maximal collection of vertex-disjoint directed cycles in the admissible subgraph can also be found in polynomial time: repeatedly find any directed cycle in the current graph by depth-first search, record it, delete its vertices, and continue until no directed cycle remains. Since each search and deletion step is polynomial and at most \(|B^*|\) vertices are deleted overall, this procedure is polynomial.
	
	At each refinement iteration, every student who moves along a selected cycle strictly prefers her new assignment to her current one. Therefore the sum of the ranks of the current assignments of students in \(B^*\) strictly decreases at every successful refinement iteration. Since each such rank lies between \(1\) and \(|S|\), this sum can decrease at most \(|B^*|(|S|-1)\) times. Hence the refinement phase performs at most \(|B^*|(|S|-1)\) successful iterations, and each iteration is polynomial-time.
	
	Therefore SJBC+ runs in time polynomial in \(|I|\) and \(|S|\).
\end{proof}

We make an observation regarding the efficiency guarantees of SJBC+. Part 3 of Theorem~\ref{thm:stpcplus} guarantees that no SJBC+ outcome is Pareto-dominated by any justifiable matching unless that matching improves every SJBC+ beneficiary and more. A stronger property would be that no SJBC+ outcome is dominated by any justifiable matching, which we cannot guarantee in general (see Appendix~\ref{app:pablo} for an example). We can attain this stronger guarantee at the cost of losing the polynomial-time algorithm. To see this, after the sequential expansion converges, add all remaining non-loop edges to the bipartite graph and compute a maximum-cardinality matching: if the result is justifiable, it covers weakly more beneficiaries than any justifiable matching; if not, one can enumerate subsets of $\mathcal{I}(P)\setminus B^*$ to find the largest justifiable improvement, at the cost of losing polynomiality (with the refinement step still needed in the end).\footnote{It remains an open question whether there exists a polynomial time algorithm that finds a justifiable improvement upon DA that is not Pareto dominated by any other justifiable matching.} In practice, however, this stronger guarantee is rarely needed: SJBC+ already covers the vast majority of improvable students, and as we show in Section~\ref{sec:simulations}, it frequently achieves the even stronger property of full Pareto-efficiency: it is not dominated by any matching, justifiable or not.

Before describing SJBC+ performance in simulations, we address two questions. First, how our justifiability framework is inherently different from existing approaches. Second, how consent-based mechanisms also face limitations in their ability to restore efficiency if consent is not self-harming.

%%%%%%%%%%%%%%%%%%%%%%%%%%%%%%%%%%%%%%%%%%%%%%%%%%%%
\section{EADA and Ex-ante Consent}
\label{sec:ewp}
%%%%%%%%%%%%%%%%%%%%%%%%%%%%%%%%%%%%%%%%%%%%%%%%%%%%

So far, we have treated consent as an endogenous constraint: whether a priority violation is permissible depends on who ultimately benefits in the realized improvement.
An alternative perspective treats consent as an ex-ante input rather than an ex-post justification.
Under this view, a subset of students agrees in advance to waive priority, and the designer is asked to improve upon the DA outcome while continuing to respect the priorities of all non-consenting students.

We can formalize this perspective through the notion of a consent-based mechanism.
A consent-based mechanism takes as input a school choice problem $P$ and a consent set $W \subseteq I$, interpreted as the set of students whose priorities may be violated, and returns a matching that respects the priorities of $I \setminus W$, i.e. there do not exist $i \in I \setminus W$, $j \in I$, and $s \in S$ such that
(i) $i$ prefers $s$ to $\mu_i$,
(ii) $\mu_j = s$, and
(iii) $\rk_s(i) < \rk_s(j)$.

The celebrated Efficiency-Adjusted Deferred Acceptance mechanism (EADA; \citealp{kesten2010school}) is the leading example of a consent-based mechanism. For a consent set $W\subseteq I$, let $\eada(P,W)$ denote the matching produced by EADA when the set of consenting students is $W$; we write $\eada_i(P,W)$ for student $i$'s assignment under $\eada(P,W)$.
\footnote{EADA is well-known in the literature and thus we postpone its description to Appendix \ref{app:da_eada}.}
It is easy to see that, when only a subset of unimprovable students consent, EADA generates a justifiable matching, since by design EADA only violates priorities of consenting students. This observation may lead the reader to conclude that our justifiability framework may be strongly linked to EADA's outcomes. This is not so, as Theorem \ref{prop:cc_vs_eada} shows.

\begin{theorem}
	\label{prop:cc_vs_eada}
	There exists a problem $P$ such that $\eada(P,I)$ is not justifiable, and the unique justifiable and Pareto-efficient matching is not equal to $\eada(P,W)$ for any $W \subseteq I$.

\end{theorem}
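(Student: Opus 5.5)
This is an existence statement, so the plan is to exhibit one problem and verify both claims by direct computation. The natural candidate is Example~\ref{ex:running}. Table~\ref{tab:improvements_example1} already lists its cycle packings, and the execution of SJBC+ above identifies the matching induced by the two cycles $(i_1\to i_2\to i_1)$ and $(i_3\to i_6\to i_4\to i_5\to i_3)$, call it $\mu^J$, as the unique matching that is both justifiable and Pareto-efficient. First I will confirm that claim from the table. Every member of $\mathcal M(P)$ is induced by some cycle packing of $G^{\da(P)}$, and the table enumerates all of them; only two rows are Pareto-efficient, and of those only $\mu^J$ is justifiable, which Lemma~\ref{prop:cc_packing} certifies through $l(\Pi)\subseteq\BB(\mu^\Pi)$. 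This step is bookkeeping.

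Next I will show that $\eada(P,I)$ is not justifiable. With full consent, EADA is Pareto-efficient and Pareto-dominates DA. By the table, the only Pareto-efficient improvements are $\mu^J$ and the single cycle $(i_1\to i_6\to i_4\to i_5\to i_1)$. I will run EADA on the example, using the description in Appendix~\ref{app:da_eada}, to see which of the two it returns. If it returns the single cycle, that row is marked non-justifiable, because the label $i_3$ on $i_1\to i_6$ belongs to an improvable non-beneficiary, and we are done. If instead EADA$(P,I)=\mu^J$, Example~\ref{ex:running} fails and I must build a different example.

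Finally I will show that $\mu^J\neq \eada(P,W)$ for every $W$. The cleanest route uses the known characterization that EADA$(P,W)$ is never Pareto-dominated by a matching respecting the priorities of $I\setminus W$, and that EADA$(P,I)$ is the student-optimal legal/priority-neutral matching. The argument splits by the form of $W$. For $W$ large enough that $\mu^J$ respects the priorities of $I\setminus W$, EADA should return its efficient output, which is the other efficient matching. For smaller $W$, $\mu^J$ itself violates the priority of some non-consenter, so it cannot be EADA's output. To make this concrete, I will list the priority violations in $\mu^J$, which by the labels are against $i_5$ (from $i_2\to i_1$) and $i_1$ (from $i_6\to i_4$). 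Any $W$ that omits $i_1$ or $i_5$ then excludes $\mu^J$. For $W\supseteq\{i_1,i_5\}$, I need EADA's underlying-rejector sequence to show that the consents of $i_1$ and $i_5$ trigger the waivers that lead to the other efficient cycle rather than to $\mu^J$.

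This last step is the main obstacle. EADA's output depends on $W$ in a nonmonotone way, so it may be necessary to trace EADA across the relevant consent sets by hand, namely those containing $\{i_1,i_5\}$ and varying in $i_2,i_3,i_4,i_6$. Two shortcuts could reduce this. One is the observation that EADA only removes interrupters' offending applications and re-runs DA. The other is that under EADA a student's waiver at a school is used only if she was an interrupter there; this would show that some trade in $\mu^J$, likely $i_2$ taking $i_1$'s seat, is never generated. If Example~\ref{ex:running} resists this, I will design a smaller ad hoc example in the style of Example~\ref{ex:noeff}, built so that the justifiable efficient matching requires a trade that EADA never performs because the displaced student is not an interrupter.
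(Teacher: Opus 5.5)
Your skeleton matches the paper's. You use Example~\ref{ex:running}. Full-consent EADA returns $(i_1\to i_6\to i_4\to i_5\to i_1)$, which violates the priority of the improvable non-beneficiary $i_3$ at $s_6$. You then trace EADA over consent sets, which is the paper's consent tree.

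\textbf{The gap is in how you prune the consent sets.} You read off whose priority your $\mu^J$ violates from the labels and get $\{i_1,i_5\}$. But labels record only \emph{potential} violations, and only of \emph{improvable} students. Under $\mu^J$, $i_5$ receives $s_3$, her top choice, so she is not violated at $s_1$. Meanwhile the unimprovable $i_7$, who appears in no label, ranks $s_4$ first and has higher priority there than $i_6$, who holds $s_4$ under $\mu^J$. The actual violated set is $\{i_1,i_7\}$. This has two consequences:
\begin{enumerate}
\item ``Any $W$ omitting $i_5$ excludes $\mu^J$'' is false; $W=\{i_1,i_7\}$ is a counterexample.
\item Your residual enumeration over $W\supseteq\{i_1,i_5\}$, varying in $i_2,i_3,i_4,i_6$, skips such sets. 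It also leaves $i_7$ unspecified, although $i_7$'s consent decides EADA's first round, since $(i_7,s_4)$ is the last interrupter of the initial DA run.
\end{enumerate}
Two of your heuristics are also false. First, EADA need not be efficient when $\mu^J$ respects non-consenters' priorities: by Proposition~\ref{prop:bigproblem}, $\eada(P,\{i_1,i_5,i_7\})$ is the inefficient $(i_1\to i_4\to i_5\to i_1)$. Second, EADA does generate the trade of $i_2$ into $s_1$: $\eada(P,\{i_5\})$ deletes $s_1$ from $i_5$'s list and yields $(i_1\to i_2\to i_1)$.

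\textbf{With the correct violated set, your idea works and is shorter than the paper's tree.}
\begin{enumerate}
\item Since EADA never violates a non-consenter's priority, $i_7\notin W$ already rules out $\mu^J$.
\item If $i_7\in W$, EADA first deletes $s_4$ from $i_7$'s list. In the rerun the only interrupter is $(i_3,s_6)$. After deleting that (when $i_3\in W$), the only interrupter is $(i_5,s_6)$.
\item So the output is $(i_1\to i_4\to i_5\to i_1)$ unless $i_3,i_5\in W$, in which case it is $(i_1\to i_6\to i_4\to i_5\to i_1)$. Neither is $\mu^J$.
\end{enumerate}
This is exactly the ``yes'' branch of the paper's tree. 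The priority-respecting property disposes of the ``no'' branch, which the paper only sketches.

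A minor point on your bookkeeping step: Table~\ref{tab:improvements_example1} is not exhaustive. It omits $(i_4\to i_5\to i_6\to i_4)$ and the two packings pairing $(i_1\to i_2\to i_1)$ with $(i_4\to i_5\to i_4)$ or with $(i_4\to i_5\to i_6\to i_4)$. Each of these has a non-beneficiary ($i_3$ or $i_6$) in its label set, so uniqueness of $\mu^J$ survives.
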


\begin{proof}
	\noindent [Part 1]
	In Example~\ref{ex:running}, the EADA outcome under full consent implements the following trades
	\[(i_1 \stackrel{i_3, i_5}{\rightarrow} i_6 \stackrel{i_1}{\rightarrow} i_4 \stackrel{\emptyset}{\rightarrow} i_5 \stackrel{\emptyset}{\rightarrow} i_1)
	\]
	Note that the priority of $i_3$ is violated but she is an improvable student who does not benefit from this exchange. Thus, the resulting matching is not justifiable.
	
	\noindent [Part 2]
	For the same example, in Appendix \ref{app:eada-exec} we consider all possible last interrupters during EADA's execution, and construct a consent tree that shows each possible outcome. The unique justifiable and Pareto-efficient matching is never produced.
\end{proof}

Figure~\ref{fig:twomatchings} below presents the improvement executed by EADA with full consent, as well as the unique justifiable and Pareto-efficient matching in Example~\ref{ex:running}.\footnote{In this example, the unique justifiable and Pareto-efficient matching improves more students and generates fewer blocking pairs than EADA with full consent; this double-dominance of the EADA outcome is explored by \citet{knipe2025improvable}.}

\begin{figure}[h]
	\centering
	\resizebox{0.6\textwidth}{!}{%  % maintains aspect ratio
		\input{twomatchings}
	}
	\caption{EADA with full consent (left) and the justifiable and efficient matching (right) in Example~\ref{ex:running}.}
	\label{fig:twomatchings}
\end{figure}
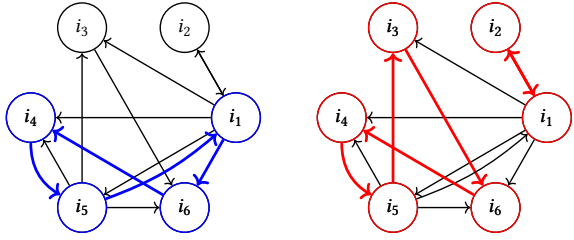
An immediate consequence of Theorem~\ref{prop:cc_vs_eada}, together with the fact that under full consent EADA selects the unique student-optimal priority-efficient assignment and the unique student-optimal legal assignment, is that justifiability is also at odds with other prominent frameworks for controlled priority violations.\footnote{A matching $\mu$ is priority neutral if no matching $\nu$ can make any student whose priority is violated by $\mu$ better off unless $\nu$ violates the priority of some student who is made worse off \citep{reny2022efficient}. 
	A matching $\nu$ blocks $\mu$ if there exists a student who prefers her assignment under $\nu$ to her assignment under $\mu$ and who could be admitted to that school ahead of at least one student assigned there under $\mu$. A set $L$ of matchings is legal if (i) every matching outside $L$ is blocked by some matching in $L$, and (ii) no matching in $L$ blocks another matching in $L$. \citet{ehlers2020legal} show that the legal set is unique and contains a student-optimal element.
	
}

\begin{corollary}
	\label{prop:cc_outside_frameworks}
	The student-optimal priority-efficient assignment and the student-optimal legal assignment need not be justifiable. Conversely, there exists a problem $P$ with a unique Pareto-efficient and justifiable matching $\mu$ that is neither legal nor priority-neutral (and hence not priority-efficient).
\end{corollary}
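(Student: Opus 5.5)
The corollary follows from Theorem~\ref{prop:cc_vs_eada} applied to Example~\ref{ex:running}, together with two characterizations from the literature. \citet{reny2022efficient} shows that $\eada(P,I)$ is the unique student-optimal priority-efficient assignment. \citet{ehlers2020legal} show that $\eada(P,I)$ is the unique student-optimal legal assignment. So both objects in the first sentence equal $\eada(P,I)$ in every problem.

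For the first claim, Part~1 of Theorem~\ref{prop:cc_vs_eada} already exhibits Example~\ref{ex:running}, where $\eada(P,I)$ executes $(i_1 \to i_6 \to i_4 \to i_5 \to i_1)$. This violates the priority of $i_3$, who is improvable but not a beneficiary, so the matching is not justifiable. Substituting the two characterizations gives the first sentence immediately.

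For the converse, take the unique justifiable and Pareto-efficient matching $\mu$ of Example~\ref{ex:running}, namely the packing $\{(i_1 \to i_2 \to i_1),(i_3\to i_6\to i_4\to i_5\to i_3)\}$. I would argue in two steps.

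\textbf{Legality.} Every legal assignment is weakly Pareto-dominated by the student-optimal legal assignment, which is $\eada(P,I)$. Since $\mu$ is Pareto-efficient, legality of $\mu$ would force $\mu=\eada(P,I)$. But these are different matchings: $i_2$ benefits under $\mu$ and not under EADA. So $\mu$ is not legal.

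\textbf{Priority neutrality.} Here I need a comparable domination fact. I would try the statement from \citet{reny2022efficient} that $\eada(P,I)$ weakly Pareto-dominates every priority-neutral matching. Efficiency of $\mu$ then again forces equality, which is false.

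If that domination property is not available in the needed form, the fallback is a direct check. Take a student whose priority $\mu$ violates, for instance $i_5$ at the edge $i_2\to i_1$ (label $\{i_5\}$). Let $\nu=\eada(P,I)$ and verify that $\nu$ makes $i_5$ better off than under $\mu$ without violating the priority of anyone made worse off relative to $\mu$. That requires checking the few students whose assignments differ between $\mu$ and $\nu$ against the preference table. Finally, priority-efficiency is by definition priority neutrality plus efficiency, so failing neutrality also shows $\mu$ is not priority-efficient.

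The main obstacle is this priority-neutrality step. It depends on the exact form of Reny's result: whether EADA dominates all priority-neutral matchings, or only those that are also efficient. If only the latter holds, that is still enough, because $\mu$ is efficient and would then be priority-efficient, hence equal to $\eada(P,I)$ by uniqueness. Otherwise the explicit witness $\nu$ above must be checked by hand.
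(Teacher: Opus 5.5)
Your proof is correct and follows the paper's route: the paper presents the corollary as an immediate consequence of Theorem~\ref{prop:cc_vs_eada} together with the identification of $\eada(P,I)$ as the student-optimal legal assignment and the unique priority-efficient assignment, and your argument is exactly what makes the converse immediate, namely that a Pareto-efficient legal or priority-neutral matching must coincide with $\eada(P,I)$, whereas $\mu$ does not ($i_2$ benefits under $\mu$ but not under $\eada(P,I)$). Drop the hand-checked fallback, though, because it fails: $i_5$'s priority is not actually violated under $\mu$ (she gets her top choice $s_3$, and the label $\{i_5\}$ on $i_2\to i_1$ only records a potential violation), and $\eada(P,I)$ violates $i_3$'s priority at $s_6$ while leaving her worse off than under $\mu$ --- if you want an explicit witness, take $\nu=\jbc(P)$, which makes $i_1$ better off (her priority at $s_4$ is violated by $i_6$ under $\mu$) and violates only the priority of $i_7$, who is not made worse off.
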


A similar statement can be made regarding essential stability, a weakening of stability that is always satisfied by EADA with full consent \citep{troyan2020essentially}.\footnote{A matching is essentially stable if any priority-based claim initiates a chain of reassignments that results in the initial claimant losing the object.} It is a direct implication of Theorem \ref{prop:cc_vs_eada} and the aforementioned property that essentially stable matchings need not be justifiable. One can straightforwardly verify (as we do in Appendix \ref{app:essential}) that the unique justifiable and Pareto-efficient matching in Example \ref{ex:running} admits a non-vacuous reassignment chain, showing that a justifiable and Pareto-efficient improvement upon DA need not be essentially stable either.

\subsection{The Limitations of Ex-Ante Consent}

We now articulate three properties that one might reasonably want from a consent-based mechanism.
First and foremost, consent should not be self-harming.

\begin{definition}
\label{def:ic}
A consent-based mechanism $M$ incentivizes consent if for all $P$, all $W \subseteq I$, and all $i \in I$,
\[
\rk_i\!\bigl(M_i(P,\,W \cup \{i\})\bigr)
\ \le\
\rk_i\!\bigl(M_i(P,\,W)\bigr).
\]
\end{definition}

Said differently, every student should be weakly incentivized to consent to waive her priorities. Second, given a fixed consent set, the mechanism should not leave obvious efficiency gains unexploited.

\begin{definition}
\label{def:ce}
A consent-based mechanism $M$ is constrained efficient if for all $P$ and $W \subseteq I$, there is no matching $\mu$ such that
\begin{enumerate}
	\item $\mu$ respects the priorities of $I \setminus W$, and
	\item $\mu$ Pareto-dominates $M(P,W)$.
\end{enumerate}
\end{definition}

EADA satisfies both incentivized consent and constrained efficiency, and these guarantees are central to its appeal to both academics and policymakers.\footnote{In fact, EADA satisfies a property stronger than incentivizing consent: giving consent does not harm \emph{or benefit} any consenter \citep{tang2014new}.}
Yet, a different efficiency desideratum is that a consent-based mechanism should return a Pareto-efficient allocation whenever efficiency is compatible with the announced waivers.

\begin{definition}
\label{def:ewp}
Given $P$ and $W \subseteq I$, let $\mathcal{M}(P,W)$ denote the set of matchings that
\begin{enumerate}
	\item respect the priorities of $I \setminus W$,
	\item are Pareto-efficient (among all feasible matchings), and
	\item Pareto-dominate $\da(P)$.
\end{enumerate}
A consent-based mechanism $M$ is efficient whenever possible if whenever $\mathcal{M}(P,W) \neq \emptyset$, it selects such a matching, i.e., $M(P,W) \in \mathcal{M}(P,W)$.
\end{definition}

This benchmark strengthens a well-known property of EADA: when all students consent, EADA restores Pareto-efficiency.
Efficiency whenever possible asks for more.
It requires that even under partial consent, the mechanism exploit all feasible efficiency gains whenever an efficient, priority-respecting improvement over DA exists.
Unfortunately, EADA fails this stronger requirement.

\begin{proposition}
\label{prop:bigproblem}
For partial consent sets $W \subsetneq I$, $\eada(P,W)$ is not efficient whenever possible: it may fail to be Pareto-efficient even though $\mathcal{M}(P,W) \neq \emptyset$.
\end{proposition}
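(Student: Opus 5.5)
This is an existence claim, so the plan is to exhibit an explicit problem $P$ and a partial consent set $W\subsetneq I$ and then verify three facts. First, some matching $\nu$ respects the priorities of $I\setminus W$, is Pareto-efficient, and Pareto-dominates $\da(P)$, so that $\mathcal{M}(P,W)\neq\emptyset$. Second, $\eada(P,W)\neq\nu$. Third, $\eada(P,W)$ is itself Pareto-inefficient.

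The natural mechanism behind such an example is that EADA's output depends only on which \emph{interrupters} consent. It is blind to whether an efficient matching happens to respect the priorities of non-consenters for reasons unrelated to the interrupter structure. The cleanest target is a small unit-capacity instance, say three or four students and schools, with the following features. There are two disjoint Pareto-improving cycles in $G^{\da(P)}$ whose joint execution yields an efficient $\nu$. One of these cycles (call it $C$) is blocked in EADA because its interrupter $k$ does not consent. However, executing the cycles together places $k$ at a school she prefers to the one she interrupts at. Hence, in $\nu$, $k$'s priority is not violated even though $k\notin W$.

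EADA then leaves $C$ unexecuted, since it only removes the interrupter's claim when $k\in W$. Meanwhile $\nu$ respects non-consenters' priorities precisely because $k$ is compensated elsewhere. Concretely, I would let $k$ sit on the other cycle $C'$, which EADA does execute, because its only interrupter lies in $W$ or it has none.

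The key steps are as follows.
\begin{enumerate}
\item Write preferences and priorities, and run DA to identify the DA matching and the interrupter pairs.
\item Run EADA with consent set $W$, where $W$ contains only the interrupters needed for $C'$. Record its outcome and exhibit an explicit Pareto improvement over it, which shows it is inefficient.
\item Write down $\nu$. Check Pareto-efficiency, for instance by checking that its envy digraph is acyclic, equivalently by serial-dictatorship reasoning. Check domination of $\da(P)$. Finally, check that no $i\in I\setminus W$ has a higher-priority claim to any school she prefers to $\nu_i$.
\end{enumerate}

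The main obstacle is step 2. EADA's iterative removal of interrupters can interact subtly: rerunning DA after deleting one consented interrupter's application may create new interrupters, or may change whether $k$ interrupts at all. So the instance must be tuned so that, after the consented interruptions are removed, $k$ remains a non-consenting last interrupter and blocks $C$ in every subsequent round. I would first try a four-student instance in which $C'$ is a 2-cycle with no interrupter and $C$ is a 2-cycle interrupted by $k\in C'$, with $W=\emptyset$. Then EADA coincides with the first-round improvement and the verification is short. If that instance makes $\eada(P,\emptyset)$ trivially equal to $\da(P)$ in a way that breaks the example, I would adjust priorities so that the interruption by $k$ survives, keeping $W$ small. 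The case $W=\emptyset$ is a valid partial consent set, since it is a proper subset of $I$.
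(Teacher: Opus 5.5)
\textbf{There is a genuine gap.} You never exhibit an instance, and the template you plan to search in is empty.

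First, $W=\emptyset$ can never be a witness. With no consenters, EADA removes nothing and returns $\da(P)$. Moreover, a matching that respects the priorities of all of $I$ is stable. Since $\da(P)$ is the student-optimal stable matching, no such matching Pareto-dominates it, so $\mathcal{M}(P,\emptyset)=\emptyset$ and EADA is vacuously efficient whenever possible at $W=\emptyset$.

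Relatedly, there is no ``2-cycle with no interrupter.'' Executing any cycle of $G^{\da(P)}$ on its own Pareto-improves DA, so it must violate someone's priority. For a 2-cycle, the violated student is an interrupter outside the cycle: she held the contested seat when the entering student was rejected, and was displaced later. Hence EADA implements a cycle only after some \emph{consenting} interrupter has given up a school.

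Second, the structure fails even with $W\neq\emptyset$. Suppose EADA executes $C'$ and $\nu$ executes $C'$ together with a disjoint $C$. Then $\nu$ Pareto-dominates $\eada(P,W)$ while respecting the priorities of $I\setminus W$. That is exactly what constrained efficiency (Definition~\ref{def:ce}), which the paper notes EADA satisfies, rules out.

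You can also see the inconsistency directly. The student $k$ holds the same school $t$ in both matchings, so $\nu$ respects her priority at $s$ only if $t\succ_k s$. But $k\notin W$ keeps her original list. So in EADA's final DA run she settles at $t$ without ever applying to $s$, and her non-consent cannot be what keeps $C$ out.

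The missing idea is that any witness $\nu\in\mathcal{M}(P,W)$ must be Pareto-\emph{incomparable} with $\eada(P,W)$. Some student must be strictly worse off under $\nu$ than under EADA, so $\nu$'s cycles must cross EADA's trades rather than extend them. The paper's proof does exactly this in Example~\ref{ex:running} with $W=\{i_1,i_5,i_7\}$:
\begin{itemize}
\item EADA executes $(i_1\to i_4\to i_5\to i_1)$, giving $i_1$ school $s_4$.
\item The improving swap of $i_1$ and $i_6$ violates the non-consenter $i_3$'s priority at $s_6$.
\item The efficient $\nu$ induced by $\{(i_1\to i_2\to i_1),(i_3\to i_6\to i_4\to i_5\to i_3)\}$ gives every non-consenter her top choice, but gives $i_1$ only $s_2$.
\end{itemize}
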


\begin{proof}
Consider Example~\ref{ex:running} and let $W=\{i_1,i_5,i_7\}$.
Run Kesten's EADA (see a detailed execution in Appendix \ref{app:eada-exec}). In the first DA run on $P$, the last interrupter in $W$ is $i_7$ at $s_4$. Since $i_7\in W$, EADA deletes $s_4$ from $i_7$'s preference list and reruns DA.
In the resulting DA run, the last interrupter is $i_3$ at $s_6$, and since $i_3\notin W$ the procedure stops. Hence $\eada(P,W)$ implements the exchange $(i_1\rightarrow i_4\rightarrow i_5\rightarrow i_1)$, so in particular $\eada_{i_1}(P,W)=s_4$ and $\eada_{i_6}(P,W)=s_6$.

\noindent Define $\nu$ by swapping the assignments of $i_1$ and $i_6$ in $\eada(P,W)$ and leaving everyone else unchanged. This yields a feasible matching and makes both $i_1$ and $i_6$ strictly better off (in Example~\ref{ex:running}, $i_1$ prefers $s_6$ to $s_4$ and $i_6$ prefers $s_4$ to $s_6$), so $\nu$ Pareto-dominates $\eada(P,W)$. Thus $\eada(P,W)$ is not Pareto-efficient.

\noindent On the other hand, with the same consent set $W$, the cycle packing
$\{(i_1\rightarrow i_2\rightarrow i_1),(i_3\rightarrow i_6\rightarrow i_4\rightarrow i_5\rightarrow i_3)\}$
induces a matching that Pareto-dominates $\da(P)$, is Pareto-efficient, and respects the priorities of every non-consenting student in $I\setminus W$ (this can be verified directly from Example~\ref{ex:running}). Hence $\mathcal{M}(P,W)\neq\emptyset$.
\end{proof}

Proposition \ref{prop:bigproblem} shows that fixing waivers ex-ante does not, by itself, guarantee that available efficiency gains will be realized.
This raises a deeper question: is the failure specific to EADA, or does it reflect a more fundamental tension among the desiderata above?
Our next Theorem shows that the aforementioned tension is structural in the ex-ante consent framework.

\begin{theorem}
\label{thm:bigimpo}
There is no consent-based mechanism $M$ that weakly Pareto-dominates DA and simultaneously satisfies all three properties:
(i)~efficiency whenever possible,
(ii)~constrained efficiency, and
(iii)~incentivizing consent.
\end{theorem}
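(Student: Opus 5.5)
The proof will be by contradiction on a single small problem $P$. I will compare two consent sets $W \subsetneq W'$ with $W' = W \cup \{i\}$, and use the three properties in tandem. The idea is to choose $P$ so that several constraints bind at once:
\begin{itemize}
\item under consent set $W$, efficiency whenever possible forces $M(P,W)$ to be some efficient matching $\mu$ in $\mathcal{M}(P,W)$, and that set is a singleton or otherwise tightly pinned down;
\item under the larger consent set $W'$, the matchings respecting $I\setminus W'$ include a matching $\nu$ that Pareto-dominates DA, is Pareto-efficient, and gives student $i$ a strictly better school than $\mu$ does, while other efficient matchings give $i$ a worse school;
\item constrained efficiency plus incentivized consent then force $M(P,W')$ to give $i$ at least $\mu_i$, but this turns out to be incompatible with every candidate outcome.
\end{itemize}
Equivalently, and perhaps more naturally, I could run the argument downward: start from a larger consent set $W'$ where $\mathcal{M}(P,W')\neq\emptyset$ pins $M(P,W')$, and then remove one consenter $i$.

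Concretely, I would build a problem of four to six students with unit capacities. Its DA envy digraph should contain two overlapping improvement cycles, $C_1$ and $C_2$, sharing a student, plus a disjoint third option. The design should satisfy three requirements.
\begin{enumerate}
\item With consent set $W$, the only priority-respecting Pareto improvements over DA are those executing $C_1$, and $C_1$ alone yields an efficient matching. Efficiency whenever possible then fixes $M(P,W)=\mu^{C_1}$.
\item With $W' = W\cup\{i\}$, the cycle $C_2$ becomes admissible. Executing $C_2$ leaves a residual inefficiency, and no efficient matching respecting $I\setminus W'$ exists, so $\mathcal{M}(P,W')=\emptyset$ and property (i) is silent there.
\item Constrained efficiency at $W'$ forces $M(P,W')$ to be an undominated matching among those respecting $I\setminus W'$. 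The instance must make every such undominated matching give $i$ a school strictly worse than $\mu^{C_1}_i$. For example, $C_1$ is dominated under $W'$ by a matching that executes $C_2$ together with another trade in which $i$ does worse. This contradicts incentivizing consent.
\end{enumerate}
Weak Pareto-dominance over DA is used to restrict candidate outputs to $\mathcal{M}(P)$, so the analysis can be done entirely via cycle packings in $G^{\da(P)}$.

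After fixing the example, the key steps are as follows:
\begin{enumerate}
\item Compute DA and the envy digraph, then list all cycle packings; there should be few of them.
\item For each packing, record which non-consenting students' priorities it violates, giving the respected sets under $W$ and under $W'$.
\item Determine Pareto-efficiency of each resulting matching among all matchings.
\item Conclude via the case chain (i) $\Rightarrow$ $M(P,W)$ is fixed; (ii) $\Rightarrow$ $M(P,W')$ lies in the undominated set; (iii) is then violated.
\end{enumerate}

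The main obstacle is designing the instance. Adding the consenter must enlarge the feasible set in a way that makes the old efficient outcome constrained-dominated, so constrained efficiency bites. At the same time, no efficient matching may become feasible, otherwise efficiency whenever possible would pin down the outcome instead. And the consenter must be hurt in every remaining undominated option. I would first try adapting Example~\ref{ex:running} together with the consent set from Proposition~\ref{prop:bigproblem}, since there EADA already exhibits the required gap. If that fails, I would construct a fresh minimal example by hand, targeting the overlapping-cycle structure described above.
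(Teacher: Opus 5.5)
\paragraph{Gap: efficiency whenever possible is placed at the wrong end of the chain.} Your design requirements cannot all hold at once. Since $W\subseteq W'$, respecting the priorities of $I\setminus W$ implies respecting those of the smaller set $I\setminus W'$, so $\mathcal{M}(P,W)\subseteq\mathcal{M}(P,W')$.
\begin{itemize}
\item Requirement 1 gives an efficient $\mu^{C_1}$ that respects $I\setminus W$. Hence $\mu^{C_1}\in\mathcal{M}(P,W')$, which contradicts requirement 2 ($\mathcal{M}(P,W')=\emptyset$).
\item Requirement 3 asks that the Pareto-efficient $\mu^{C_1}$ be Pareto-dominated. That is impossible, and a fortiori it cannot be dominated by a matching in which $i$ does worse.
\end{itemize}
More fundamentally, whenever (i) fixes $M(P,W)$ at an efficient matching, setting $M(P,W')=M(P,W)$ satisfies (i), (ii) and (iii) for this pair. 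So no instance can produce a contradiction along this route. The same objection applies to your opening bullets, where $\nu$ even gives $i$ a better school than $\mu$. Your ``downward'' remark is not equivalent to this plan; it is the orientation that can work. However, you do not develop it, and no instance is fixed or checked.

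\paragraph{The structure that works.} This is the paper's route, and it reverses the roles of (i) and (ii).
\begin{itemize}
\item Use constrained efficiency at a small consent set. There $\mathcal{M}(P,W)=\emptyset$, and so few improvements respect $I\setminus W$ that (ii) pins down an inefficient outcome.
\item Use incentivized consent to carry a lower bound on the new consenter's school up to $W'$.
\item Use efficiency whenever possible at $W'$. There some efficient matching respects $I\setminus W'$, but every such matching gives the consenter less than the lower bound.
\end{itemize}
In Example~\ref{ex:running} this runs as follows.
\begin{itemize}
\item With $W_1=\{i_7\}$, the only improvement respecting $I\setminus W_1$ is $(i_1\rightarrow i_4\rightarrow i_5\rightarrow i_1)$. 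So (ii) forces it, and $i_1$ gets $s_4$.
\item The paper then passes through $W_2=\{i_5,i_7\}$, where (iii) for $i_5$ keeps the same outcome.
\item At $W_3=\{i_1,i_5,i_7\}$, (iii) requires $s_4$ or better for $i_1$. The only improvements achieving this are the inefficient $3$-cycle and $(i_1\rightarrow i_6\rightarrow i_4\rightarrow i_5\rightarrow i_1)$, which violates $i_3$'s priority.
\item Meanwhile the efficient packing $\{(i_1\rightarrow i_2\rightarrow i_1),(i_3\rightarrow i_6\rightarrow i_4\rightarrow i_5\rightarrow i_3)\}$ respects $I\setminus W_3$ and gives $i_1$ only $s_2$. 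So (i) applies and yields the contradiction.
\end{itemize}
The middle set is not essential. In that packing only $i_1$ and $i_7$ suffer priority violations, so the same argument works directly for $\{i_7\}\subset\{i_1,i_7\}$, which fits your two-set format.
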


\begin{proof}
Fix the school choice problem $P$ from Example~\ref{ex:running}.
Suppose, by contradiction, that there exists a consent-based mechanism $M$ that weakly Pareto-dominates DA and satisfies efficiency whenever possible, constrained efficiency, and incentivizing consent. Consider the following three nested consent sets:

\begin{enumerate}
	\item \textbf{$W_1=\{i_7\}$.}
	In Example~\ref{ex:running}, the unique improvement over $\da(P)$ that can be implemented without violating the priorities of any non-consenting student in $I\setminus W_1$ is the cycle
	\[
	(i_1 \rightarrow i_4 \rightarrow i_5 \rightarrow i_1).
	\]
	This cycle yields a matching that Pareto-dominates $\da(P)$.
	By constrained efficiency, $M(P,W_1)$ must implement this improvement.
	
	\item \textbf{$W_2=\{i_5,i_7\}$.}
	Incentivizing consent implies that $i_5$ must get $s_1$ or better, and must violate the priorities only of $W_2$. There is only one improvement that achieves this, i.e. exactly the one we found before:
	\[
(i_1 \rightarrow i_4 \rightarrow i_5 \rightarrow i_1).
\]

In particular, note that if we were to allow any cycle with the edge $i_6 \stackrel{i_1}{\rightarrow} i_4$, $i_1$'s priority at $s_4$ would be violated. And the 4-node cycle implemented by EADA with full consent is not implementable here because it violates the priority of $i_3 \notin W_2$.

	\item \textbf{$W_3=\{i_1,i_5,i_7\}=W_2\cup\{i_1\}$.}
	Incentivizing consent requires that $i_1$ is assigned to $s_4$ or better. Only two improvement cycles do this, namely:
	\begin{enumerate}
		\item 
	$(i_1 \stackrel{i_3, i_5}{\rightarrow} i_6 \stackrel{i_1}{\rightarrow} i_4 \stackrel{\emptyset}{\rightarrow} i_5 \stackrel{\emptyset}{\rightarrow} i_1)$

\item	$(i_1 \stackrel{\emptyset}{\rightarrow} i_4 \stackrel{\emptyset}{\rightarrow} i_5 \stackrel{\emptyset}{\rightarrow} i_1)$ 
	
\end{enumerate}
	
	Improvement (a) violates the priorities of $i_3 \notin W_3$, so improvement (b) is chosen. 
Yet, the improvement
			\[
			\{(i_1 \stackrel{\emptyset}{\rightarrow} i_2 \stackrel{i_5}{\rightarrow} i_1),(i_3 \stackrel{\emptyset}{\rightarrow} i_6 \stackrel{i_1}{\rightarrow} i_4 \stackrel{\emptyset}{\rightarrow} i_5 \stackrel{\emptyset}{\rightarrow} i_3)\}
			\]
yields an efficient matching that Pareto-dominates $\da(P)$ while respecting the priorities of all non-consenting students in $I\setminus W_3$.
	Since $M$ is efficient whenever possible, it must satisfy $M(P,W_3)\in \mathcal{M}(P,W_3)$, contradicting the fact that $M(P,W_3)$ is induced by (b) and is inefficient. The contradiction completes the proof.
\end{enumerate}

\end{proof}

Theorem~\ref{thm:bigimpo} identifies an inherent limitation of ex-ante consent designs.
If consent is treated as an arbitrary input, then the natural objective of exploiting all feasible efficiency gains under partial consent conflicts with the very incentive requirement needed to elicit consent in the first place.

Taken together with our earlier results on endogenous justifiability, a coherent picture emerges that shows that achieving efficiency and Pareto-dominating DA comes at important costs, even when some priorities can be violated .
On our endogenous framework, it requires violating the priorities of non-beneficiaries.
On the ex-ante side, even when priority waivers are explicitly granted in advance, robustly achieving efficiency under partial consent runs into an equally sharp obstacle: the mechanism cannot systematically use the available waivers without sometimes undermining the incentives that make waivers available in the first place.

Having clarified these theoretical limits, both under endogenous justifiability and under ex-ante consent, we now turn to the empirical question of how our proposed SJBC+ algorithm fares in practice.
The next section compares SJBC+ and EADA in simulations, shedding light on the practical relevance of the theoretical trade-offs identified above.

%%%%%%%%%%%%%%%%%%%%%%%%%%%%%%%%%%%%%%%%%%%%%%%%%%%
\section{Simulations}
\label{sec:simulations}
%%%%%%%%%%%%%%%%%%%%%%%%%%%%%%%%%%%%%%%%%%%%%%%%%%%
The results so far clarify what can and cannot be guaranteed in theory.
We now ask how often these impossibilities bind in random environments, and how frequently the mechanisms nevertheless deliver Pareto-efficient outcomes.
To answer this, we evaluate SJBC+ in simulations and benchmark it against DA and the leading consent-based mechanism, EADA.
To calibrate the ex-ante consent environment, we use observed behavior from laboratory experiments:
\citet{cerrone2022school} document that around 50\% of subjects voluntarily consent to waive priorities (52\% to be precise).
We therefore take this one-half rate as our baseline for generating consent sets in EADA.\footnote{The replication code is available at \url{https://www.josueortega.com}. See also \citet{freer2026experimental} for further 
	experimental evidence on EADA's performance.}

We generate random school choice problems with $n \in \{50, 100\}$ students and $n$ schools, each with unit capacity. For each market size, we construct 2,000 independent instances under two preference structures: (i) independent preferences (students and schools rank uniformly at random), and (ii) correlated preferences, where students exhibit positive correlation in how they rank schools, reflecting realistic settings where school quality is commonly valued.\footnote{Specifically, we draw a common latent value $q_s \sim \mathcal{N}(0,1)$ for each school. Each student $i$ ranks schools according to utilities
	$u_{is} = \rho q_s + \sqrt{1-\rho^2}\,\varepsilon_{is}$, where $\varepsilon_{is} \sim \mathcal{N}(0,1)$ are i.i.d.\ shocks. Preferences are obtained by sorting these utilities. This specification yields positively correlated rankings across students, with correlation increasing in $\rho$ (here $\rho=0.50$).} For each instance, we compute outcomes under DA (baseline), SJBC$+$, EADA with 50\% consent, and EADA with full consent. We measure average student rank, the number of beneficiaries relative to DA, and the frequency with which the outcome is Pareto-efficient or justifiable.

\begin{table}[h!]
	\centering
	\caption{SJBC$+$ and EADA's performance in random markets}
	\label{tab:simulation}
	\small
	\begin{tabular}{@{}lcccc@{}}
		\toprule
		& \multicolumn{2}{c}{iid} & \multicolumn{2}{c}{correlated} \\
		\cmidrule(lr){2-3}\cmidrule(lr){4-5}
		& $n=50$ & $n=100$ & $n=50$ & $n=100$ \\
		\midrule
		
		\multicolumn{5}{@{}l}{\textbf{Average rank}} \\[-0.25ex]
		\quad DA
		& 4.2 & 4.9 & 10.4 & 18.0 \\
		& {\footnotesize (0.023)} & {\footnotesize (0.025)}
		& {\footnotesize (0.052)} & {\footnotesize (0.088)} \\
		
		\quad EADA (full)
		& 2.6 & 2.7 & 5.3 & 6.8 \\
		& {\footnotesize (0.007)} & {\footnotesize (0.005)}
		& {\footnotesize (0.018)} & {\footnotesize (0.019)} \\
		
		\quad EADA (50\%)
		& 3.3 & 3.6 & 8.2 & 12.7 \\
		& {\footnotesize (0.016)} & {\footnotesize (0.015)}
		& {\footnotesize (0.046)} & {\footnotesize (0.072)} \\
		
		\quad SJBC$+$
		& 2.7 & 2.9 & 5.8 & 8.0 \\
		& {\footnotesize (0.009)} & {\footnotesize (0.008)}
		& {\footnotesize (0.023)} & {\footnotesize (0.029)} \\[0.6ex]
		
		\multicolumn{5}{@{}l}{\textbf{Beneficiaries}} \\[-0.25ex]
		\quad EADA (full)
		& 19.8 & 47.5 & 32.7 & 78.0 \\
		& {\footnotesize (0.172)} & {\footnotesize (0.275)}
		& {\footnotesize (0.131)} & {\footnotesize (0.165)} \\
		
		\quad EADA (50\%)
		& 10.6 & 27.1 & 13.6 & 36.2 \\
		& {\footnotesize (0.179)} & {\footnotesize (0.320)}
		& {\footnotesize (0.200)} & {\footnotesize (0.371)} \\
		
		\quad SJBC$+$
		& 22.0 & 55.6 & 38.1 & 89.9 \\
		& {\footnotesize (0.240)} & {\footnotesize (0.452)}
		& {\footnotesize (0.217)} & {\footnotesize (0.253)} \\[0.6ex]
		
		\multicolumn{5}{@{}l}{\textbf{PE rate (\%)}} \\[-0.25ex]
		\quad EADA (full)
		& 100 & 100 & 100 & 100 \\
		& {\footnotesize (0.0)} & {\footnotesize (0.0)}
		& {\footnotesize (0.0)} & {\footnotesize (0.0)} \\
		
		\quad EADA (50\%)
		& 7.9 & 0.8 & 0.0 & 0.0 \\
		& {\footnotesize (0.6)} & {\footnotesize (0.2)}
		& {\footnotesize (0.0)} & {\footnotesize (0.0)} \\
		
		\quad SJBC$+$
		& 66.9 & 62.6 & 70.6 & 85.2 \\
		& {\footnotesize (1.1)} & {\footnotesize (1.1)}
		& {\footnotesize (1.0)} & {\footnotesize (0.8)} \\[0.6ex]
		
		\multicolumn{5}{@{}l}{\textbf{Justifiable rate (\%)}} \\[-0.25ex]
		\quad EADA (full)
		& 27.3 & 3.3 & 2.2 & 0.3 \\
		& {\footnotesize (1.0)} & {\footnotesize (0.4)}
		& {\footnotesize (0.3)} & {\footnotesize (0.1)} \\
		
		\quad EADA (50\%)
		& 36.1 & 10.4 & 20.9 & 3.5 \\
		& {\footnotesize (1.1)} & {\footnotesize (0.7)}
		& {\footnotesize (0.9)} & {\footnotesize (0.4)} \\
		
		\quad SJBC$+$
		& 100 & 100 & 100 & 100 \\
		& {\footnotesize (0.0)} & {\footnotesize (0.0)}
		& {\footnotesize (0.0)} & {\footnotesize (0.0)} \\
		
		\bottomrule
	\end{tabular}
	\vspace{0.5em}
	\begin{minipage}{\textwidth}
		\footnotesize
		Notes: Results from 2{,}000 replications per condition. The first two columns use iid preferences and priorities.
		The last two columns use correlated preferences ($\rho=0.50$) with iid priorities. Standard errors in parentheses.
		Lower rank indicates higher welfare. Beneficiaries are the average number of students who strictly prefer the mechanism
		outcome to DA. ``50\%'' means the consent set has size $\lfloor n/2 \rfloor$. The PE rate reports the fraction of
		instances in which the outcome is Pareto-efficient with respect to students' preferences. The justifiable rate reports
		the fraction of instances in which the mechanism outcome is justifiable.
	\end{minipage}
\end{table}
%%Code: https://www.dropbox.com/scl/fi/tke91x4w9fwc72z71pfzd/sjbc_full_reproducible.m?rlkey=2j81l54kf5ya3htiu7hksbu7b&st=txzjq8a0&dl=0

Table~\ref{tab:simulation} reveals three robust results. First, SJBC$+$ attains Pareto-efficiency in the majority of instances (63--85\%). This indicates that, in these random environments, the incompatibility between justifiability and full Pareto-efficiency (Proposition~\ref{prop:efficiency}) is often not binding.
Second, SJBC$+$ compares favorably to EADA under partial consent.
When only half of students consent, EADA’s Pareto-efficiency rate is less than 10\% in all cases, and exactly zero in correlated markets, where SJBC$+$ continues to return Pareto-efficient outcomes in over 70\% of instances.
SJBC$+$ yields more beneficiaries on average than EADA with full consent, at a modest cost in average rank.

And third, SJBC$+$ is justifiable in every instance by construction, whereas EADA's justifiability rate deteriorates sharply with market size and preference correlation. Under full consent, it falls from 27.3\% in iid markets with $n=50$ to 0.3\% in correlated markets with $n=100$. The comparison between the two EADA variants is also intuitive. Full consent allows more trades and therefore delivers larger welfare gains, but it also creates more opportunities to violate the priorities of improvable students who do not themselves benefit. With only 50\% consent, EADA is more constrained: it improves fewer students and is much less often Pareto-efficient, but precisely for that reason it is more often justifiable. In our simulations, partial consent dampens both the gains and the harms of EADA.

\section{Conclusion}
\label{sec:conclusion}
We introduced justifiability as a framework for determining which improvements over DA can be implemented through priority waivers, and showed that it enables Pareto improvements that are unattainable under existing consent-based approaches.
We proposed an algorithm that selects the unique maximal strongly justifiable matching, and then built on it to obtain further justifiable improvements satisfying a weaker, yet meaningful, constrained efficiency guarantee.
Simulations indicate that, despite the inherent tension between justifiability and full Pareto-efficiency, our proposed algorithm frequently achieves Pareto-efficient outcomes while restricting priority violations to the justifiable class, and it improves the DA assignments of substantially more students than leading consent-based mechanisms.

An interesting open question that goes beyond the aim of our paper is the analysis of JBC and SJBC+'s performance in equilibrium. Both mechanisms are manipulable and thus their theoretical properties we documented may be affected by strategic behaviour.\footnote{Every mechanism that Pareto-dominates DA is manipulable \cite{abdulkadirouglu2003}. On the positive side, every mechanism that Pareto-dominates DA is not \emph{obviously} manipulable, which means every potential manipulation is risky \citep{troyan2020obvious}.} We leave a formal analysis of manipulation under JBC and SJBC+ for future work.

\small	\section*{Acknowledgments}
We are grateful to Bettina Klaus and Juan Sebastián Pereyra for their detailed feedback, and to audiences at various conferences and workshops for their comments.

% Bibliography
\bibliographystyle{ACM-Reference-Format}
\bibliography{bibliogr}

% Appendix

\newpage
\appendix
\addtocontents{toc}{\protect\setcounter{tocdepth}{-1}}

% ========================================================================
% APPENDIX: DA AND EADA ALGORITHMS
% ========================================================================

%%%%%%%

\section{DA and EADA Descriptions}
\label{app:da_eada}

This appendix describes the Deferred Acceptance (DA) algorithm and the Efficiency-Adjusted Deferred Acceptance (EADA) algorithm.

\subsection{Deferred Acceptance (DA) Algorithm}
\label{app:da}

The student-proposing Deferred Acceptance algorithm proceeds as follows:
\begin{enumerate}
\item \textbf{Round 1:} Each student applies to her most preferred school. Each school tentatively accepts the highest-priority applicants up to its capacity and rejects the rest.
\item \textbf{Round $k \ge 2$:} Each rejected student applies to her next most preferred school. Each school considers its current tentative acceptances together with new applicants, tentatively accepts the highest-priority students up to its capacity, and rejects the rest.
\item The algorithm terminates when no student is rejected. Tentative assignments become final.
\end{enumerate}

The outcome is the student-optimal stable matching $\da(P)$.

\subsection{Efficiency-Adjusted Deferred Acceptance (EADA) Algorithm}
\label{app:eada}

The EADA algorithm improves upon the DA outcome by allowing students to consent to waive priorities that do not affect their own assignments. Let $W \subseteq I$ be the set of consenting students.

\begin{enumerate}
\item \textbf{Round 0:} Run the DA algorithm with original preferences. Let $\mu^0$ be the outcome.
\item \textbf{Round $r \ge 1$:} 
\begin{itemize}
	\item Examine the DA execution from Round $r-1$.
	\item Find the {last step} where a consenting student $i \in W$ is rejected from a school $s$ for which $i$ is an {interrupter} (i.e., $i$ was tentatively placed at $s$ earlier, and at least one other student was rejected from $s$ while $i$ was tentatively placed there).
	\item Identify all such interrupting pairs $(i, s)$ with $i \in W$.
	\item If none exist, stop and output the current matching.
	\item Otherwise, for each such $(i, s)$, remove school $s$ from $i$'s preference list (keeping the relative order of other schools unchanged).
	\item Rerun DA with the updated preferences.
\end{itemize}
\item The algorithm terminates in finite time. The final matching is $\eada(P, W)$.
\end{enumerate}

The EADA mechanism satisfies the following properties:
\begin{itemize}
\item $\eada(P, W)$ weakly Pareto dominates $\da(P)$ for any $W$.
\item No non-consenting student's priority is ever violated.
\item A consenting student is never harmed by consenting: $\eada_i(P, W \cup \{i\}) \succeq_i \eada_i(P, W)$ for all $i$.
\item If all students consent ($W = I$), the outcome is Pareto-efficient.

\end{itemize}

\newpage 
%%%%%%%%%%%%%%%%%%%%%%%%%%%%%%%%%%%%%%%%%%%%%%%%%%%%%%
\section{Detailed Execution of EADA in Example~\ref{ex:running}}
\label{app:eada-exec}
%%%%%%%%%%%%%%%%%%%%%%%%%%%%%%%%%%%%%%%%%%%%%%%%%%%%%%

This appendix reports a detailed execution of Kesten's EADA procedure on our running Example~\ref{ex:running}.
We first compute $\da(P)$ and then illustrate the successive EADA iterations under (i) full consent and (ii) a restricted consent set.

\subsection{Computing $\da(P)$ (Iteration 0)}
The DA proposals evolve as follows.

\begin{center}
\begin{tabular}{lccccccc}
	\bf Iteration 0&&&&&&&\\
	round& $s_1$ & $s_2$ & $s_3$ & $s_4$ & $s_5$ & $s_6$ & $s_7$\\
	\hline
	r1  & 2 &     & 5 & \underline{6},7 & 4 & \underline{1},3 & \\
	r2  & 2 &     & 5 & \underline{1},7 & 4 & \bf \underline{3},6 & \\
	r3  & 2 & 1   & \bf 3,\underline{5} & 7 & 4 & 6 & \\
	r4  & 2 & 1   & 3 & 7 & 4 & \bf \underline{5},6 & \\
	r5  & 2 & 1   & 3 & \underline{5},7 & 4 & 6 & \\
	r6  & \underline{2},5 & 1 & 3 & 7 & 4 & 6 & \\
	r7  & 5 & \bf \underline{1},2 & 3 & 7 & 4 & 6 & \\
	r8  & 5 & 2 & \bf \underline{1},3 & 7 & 4 & 6 & \\
	r9  & 5 & 2 & 3 & 7 & \underline{1},4 & 6 & \\
	r10 & \bf 1,\underline{5} & 2 & 3 & 7 & 4 & 6 & \\
	r11 & 1 & 2 & 3 & 7 & \bf \underline{4},5 & 6 & \\
	r12 & 1 & 2 & 3 & \bf 4,\underline{7} & 5 & 6 & \\
	r13 & 1 & 2 & 3 & 4 & 5 & 6 & 7\\
	\hline
\end{tabular}
\end{center}
\medskip

\subsection{EADA under full consent}
Fix a consent set $W$ that contains every potential interrupter (equivalently, assume that whenever EADA queries an interrupter, the interrupter consents).
Kesten's procedure identifies the {last interrupter in $W$} at the DA outcome, asks that student to waive priority at the relevant school, removes that school from the student's preference list, and reruns DA on the modified problem.

In Iteration~0 the last interrupter is $i_7$ at $s_4$. Since $i_7\in W$, we delete $s_4$ from $i_7$'s preference list and rerun DA.

\begin{center}
\begin{tabular}{lccccccc}
	\bf Iteration 1&&&&&&&\\
	round& $s_1$ & $s_2$ & $s_3$ & $s_4$ & $s_5$ & $s_6$ & $s_7$\\
	\hline
	r1 & 2 & & 5 & 6 & 4 & \underline{1},3 & 7\\
	r2 & 2 & & 5 & 1,\underline{6} & 4 & 3 & 7\\
	r3 & 2 & & 5 & 1 & 4 & \underline{3},6 & 7\\
	r4 & 2 & & 3,\underline{5} & 1 & 4 & 6 & 7\\
	r5 & 2 & & 3 & 1 & 4 & \underline{5},6 & 7\\
	r6 & 2 & & 3 & 1,\underline{5} & 4 & 6 & 7\\
	r7 & \underline{2},5 & & 3 & 1 & 4 & 6 & 7\\
	r8 & 5 & 2 & 3 & 1 & 4 & 6 & 7\\
	\hline
\end{tabular}
\end{center}
\medskip

At this modified DA outcome the last interrupter is $i_3$ at $s_6$.
We therefore delete $s_6$ from $i_3$'s preference list and rerun DA.

\begin{center}
\begin{tabular}{lccccccc}
	\bf Iteration 2&&&&&&&\\
	round& $s_1$ & $s_2$ & $s_3$ & $s_4$ & $s_5$ & $s_6$ & $s_7$\\
	\hline
	r1 & 2 & & 3,\underline{5} & 6 & 4 & 1 & 7\\
	r2 & 2 & & 3 & 6 & 4 & \underline{1},5 & 7\\
	r3 & 2 & & 3 & 1,\underline{6} & 4 & 5 & 7\\
	r4 & 2 & & 3 & 1 & 4 & \underline{5},6 & 7\\
	r5 & 2 & & 3 & 1,\underline{5} & 4 & 6 & 7\\
	r6 & \underline{2},5 & & 3 & 1 & 4 & 6 & 7\\
	r7 & 5 & 2 & 3 & 1 & 4 & 6 & 7\\
	\hline
\end{tabular}
\end{center}
\medskip

Now the last interrupter is $i_5$ at $s_6$.
Deleting $s_6$ from $i_5$'s preference list and rerunning DA yields:

\begin{center}
\begin{tabular}{lccccccc}
	\bf Iteration 3&&&&&&&\\
	round& $s_1$ & $s_2$ & $s_3$ & $s_4$ & $s_5$ & $s_6$ & $s_7$\\
	\hline
	r1 & 2 & & 3,\underline{5} & 6 & 4 & 1 & 7\\
	r2 & 2 & & 3 & \underline{5},6 & 4 & 1 & 7\\
	r3 & \underline{2},5 & & 3 & 6 & 4 & 1 & 7\\
	r4 & 5 & 2 & 3 & 6 & 4 & 1 & 7\\
	\hline
\end{tabular}
\end{center}
\medskip

This outcome implements the cycle
\[
(i_1 \rightarrow i_6 \rightarrow i_4 \rightarrow i_5 \rightarrow i_1),
\]
which yields a Pareto-efficient allocation.

\subsection{EADA under partial consent}
We now illustrate the failure of EADA to achieve Pareto-efficiency under a restricted consent set that we use in the proof of Theorem \ref{thm:bigimpo}.
Let $W=\{i_1,i_5,i_7\}$.

EADA starts from $\da(P)$ (Iteration~0) and identifies the last interrupter at that outcome, which is again $i_7$ at $s_4$.
Since $i_7\in W$, we delete $s_4$ from $i_7$'s preference list and rerun DA, obtaining Iteration~1 above.

At the resulting DA outcome, the (unique) last interrupter is $i_3$ at $s_6$, but $i_3\notin W$.
By the definition of EADA with an exogenous consent set, the procedure stops at this point.
The resulting allocation corresponds to the cycle
\[
(i_1 \rightarrow i_4 \rightarrow i_5 \rightarrow i_1),
\]
highlighted in bold in the table below.

\begin{center}
\begin{tabular}{ccccccc}
	\hline
	$i_1$ & $i_2$ & $i_3$ & $i_4$ & $i_5$ & $i_6$ & $i_7$\\
	\hline
	$s_6$ & $s_1$ & $s_6$ & $\bm{s_5}$ & $s_3$ & $s_4$ & $s_4$\\
	$\bm{s_4}$ & $\bm{s_2}$ & $\bm{s_3}$ & $s_4$ & $s_6$ & $\pmb{s_6}$ & $\pmb{s_7}$\\
	$s_2$ & $\cdot$ & $\cdot$ & $\cdot$ & $s_4$ & $\cdot$ & $\cdot$\\
	$s_3$ & $\cdot$ & $\cdot$ & $\cdot$ & $\bm{s_1}$ & $\cdot$ & $\cdot$\\
	$s_5$ & $\cdot$ & $\cdot$ & $\cdot$ & $s_5$ & $\cdot$ & $\cdot$\\
	$s_1$ & $\cdot$ & $\cdot$ & $\cdot$ & $\cdot$ & $\cdot$ & $\cdot$\\
	\hline
\end{tabular}
\end{center}
\medskip

This allocation is not Pareto-efficient: for instance, $i_1$ (assigned $s_4$) and $i_6$ (assigned $s_6$) can exchange assignments and both strictly gain.

Nevertheless, under the same consent set $W=\{i_1,i_5,i_7\}$ there exists a Pareto-efficient matching that weakly Pareto-dominates $\da(P)$ and respects the priorities of every student in $I\setminus W$, obtained by implementing the two disjoint cycles
\[
(i_1 \rightarrow i_2 \rightarrow i_1)
\qquad\text{and}\qquad
(i_3 \rightarrow i_6 \rightarrow i_4 \rightarrow i_5 \rightarrow i_3).
\]
Hence $\mathcal{M}(P,W)\neq\emptyset$ while $\eada(P,W)$ is not Pareto-efficient, establishing the claim used in Proposition~\ref{prop:bigproblem}.

\subsection{Proof of Theorem \ref{prop:cc_vs_eada}, Part 2}
Finally, we show that the allocation corresponding to the cycle packing
\[
\{(i_1 \rightarrow i_2 \rightarrow i_1),\; (i_3 \rightarrow i_6 \rightarrow i_4 \rightarrow i_5 \rightarrow i_3)\}
\]
cannot be obtained by EADA under any consent set, to finalize the proof of Theorem \ref{prop:cc_vs_eada}.
Figure~\ref{fig:eada-tree} enumerates all possible EADA outcomes, starting from the last interrupter, who is asked if she consents to waive her priorities or not.

\begin{figure}[H]
\centering
\input{tree}
\caption{EADA outcomes for all consent sets in Example~\ref{ex:running}.}
\label{fig:eada-tree}
\end{figure}
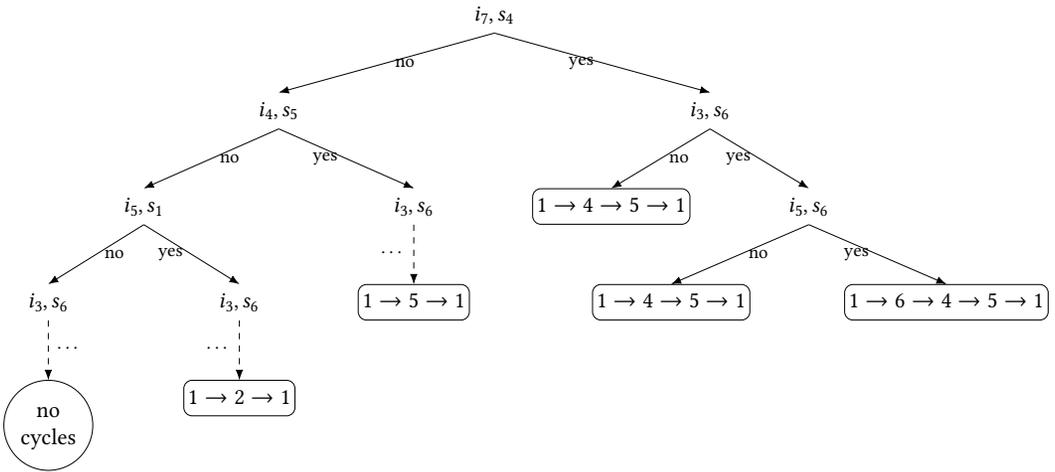

\pagebreak

%%%%%%%%%%%%%%%%%%
\section{Why SJBC Requires the ``\,+\,'' Step}
\label{app:why-plus}
%%%%%%%%%%%%%%%%%%%%%%%%%%%

This appendix explains why the final step in SJBC$+$ is necessary.
In particular, the naive procedure that iterates Top-Priority Cycles (JBC) sequentially---hereafter {Sequential JBC}---need not return a Pareto-efficient justifiable matching, even when such a matching exists.
The issue is that an early locally-justifiable trade can change the residual instance in a way that prevents reaching a Pareto-efficient outcome within the class of justifiable improvements over the DA benchmark.
SJBC$+$ addresses this by adding a final Pareto-improvement step that preserves the set of beneficiaries.

Consider the following problem with five students and five schools, each of unit capacity.
Preferences are listed in the left block and priorities in the right block; the DA assignment is indicated in bold.

\begin{example}[Preferences and priorities, one-to-one]\label{ex:whyplus}
\centering
\begin{tabular}{ccccc|ccccc}
	\hline
	$i_1$ & $i_2$ & $i_3$ & $i_4$ & $i_5$ & $s_1$ & $s_2$ & $s_3$ & $s_4$ & $s_5$ \\
	\hline
	$s_5$ & $s_4$ & $s_2$ & $s_5$ & $s_5$ & $i_1$ & $i_5$ & $\bm{i_1}$ & $\bm{i_5}$ & $\bm{i_3}$ \\
	$\pmb{s_3}$ & $s_5$ & $s_4$ & $\bm{s_2}$ & $s_1$ & $\bm{i_2}$ & $\bm{i_4}$ & $\cdot$ & $i_4$ & $i_1$ \\
	$\cdot$ & $s_2$ & $s_1$ & $\cdot$ & $\bm{s_4}$ & $i_5$ & $i_3$ & $\cdot$ & $i_1$ & $i_4$ \\
	$\cdot$ & $\bm{s_1}$ & $\bm{s_5}$ & $\cdot$ & $\cdot$ & $i_4$ & $i_2$ & $\cdot$ & $i_3$ & $i_5$ \\
	$\cdot$ & $\cdot$ & $\cdot$ & $\cdot$ & $\cdot$ & $i_3$ & $i_1$ & $\cdot$ & $i_2$ & $i_2$ \\
\end{tabular}

\end{example}

Student $i_1$ is unimprovable at the DA outcome (no other student envies $i_1$), so Sequential JBC deletes $i_1$ together with the school assigned to $i_1$ under DA (here, $s_3$), and proceeds with the reduced instance on $\{i_2,i_3,i_4,i_5\}$.
The resulting envy digraph is depicted in Figure~\ref{fig:envy-example3}.

\begin{figure}[H]
\centering
\begin{tikzpicture}[scale=0.65, ->, node distance=2.5cm,
	every node/.style={draw, circle, minimum size=0.9cm}, thick]
	\def\radius{2.5}
	\def\n{4}
	
	\foreach \i/\name in {1/i_2, 2/i_3, 3/i_4, 4/i_5} {
		\node (\name) at ({90 + 360/\n * (\i-1)}:\radius) {$\name$};
	}
	
	\draw[bend left=15] (i_2) to node[blue, pos=0.5, draw=none]{$\bm{i_4, i_5}$} (i_3);
	\draw (i_2) to node[blue,pos=0.8, draw=none]{$\bm{i_3}$} (i_4);
	\draw[bend left] (i_2) to node[blue, midway, above, draw=none, yshift=-8pt]{$\bm{i_3}$} (i_5);
	
	\draw[bend left] (i_3) to node[blue, pos=0.5, above, draw=none, yshift=-8pt,xshift=-4pt]{$\bm{i_4, i_5}$} (i_2);
	\draw[ultra thick] (i_3) to node[blue, pos=0.5, draw=none]{$\bm \emptyset$} (i_4);
	\draw[bend left=15] (i_3) to node[blue, draw=none, xshift=15pt]{$\bm{\emptyset}$} (i_5);
	
	\draw[bend left=30, ultra thick] (i_4) to node[blue, pos=0.5, draw=none]{$\bm \emptyset$} (i_3);
	
	\draw (i_5) to node[blue, pos=0.5, draw=none]{$\bm \emptyset$} (i_2);
	\draw[bend left=15] (i_5) to node[blue, pos=0.2,  draw=none]{$\bm{i_4}$} (i_3);
\end{tikzpicture}
\caption{Envy digraph for Example \ref{ex:whyplus} after deleting $i_1$ and $s_3$.}
\label{fig:envy-example3}
\end{figure}
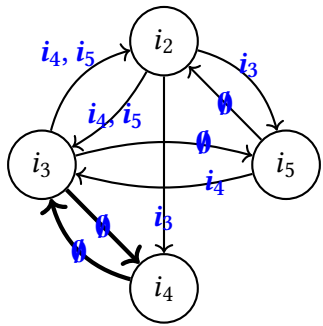

In this reduced digraph, JBC selects the $\emptyset$-labelled 2-cycle between $i_3$ and $i_4$, recommending that $i_3$ and $i_4$ exchange schools.
This yields the intermediate matching
\[
\mu' = (i_1\rightarrow s_3,\; i_2\rightarrow s_1,\; i_3\rightarrow s_2,\; i_4\rightarrow s_5,\; i_5\rightarrow s_4).
\]
At this point, Sequential JBC continues by selecting a maximum packing of admissible cycles in the {original} envy digraph that is consistent with the set of beneficiaries already created (here, $i_3$ and $i_4$).
One feasible packing is
\[
\bigl(i_4 \stackrel{\emptyset}{\rightarrow} i_3 \stackrel{\emptyset}{\rightarrow} i_4\bigr), \qquad
\bigl(i_2 \stackrel{i_3}{\rightarrow} i_5 \stackrel{\emptyset}{\rightarrow} i_2\bigr),
\]
which remains justifiable.
The resulting allocation is Pareto-efficient; for instance, it coincides with the outcome of serial dictatorship under the order $i_2,i_3,i_4,i_5,i_1$.

However, Sequential JBC can instead select the (also admissible and justifiable) cycle
\[
\bigl(i_2 \stackrel{i_3}{\rightarrow} i_4 \stackrel{\emptyset}{\rightarrow} i_3
\stackrel{\emptyset}{\rightarrow} i_5 \stackrel{\emptyset}{\rightarrow} i_2\bigr),
\]
which induces the corresponding allocation shown below:
\begin{center}
\begin{tabular}{ccccc}
	$i_1$ & $i_2$ & $i_3$ & $i_4$ & $i_5$ \\
	\hline
	$s_5$ & ${s_4}$ & ${s_2}$ & $\bm{s_5}$ & $s_5$ \\
	$\pmb{s_3}$ & $s_5$ & $\bm{s_4}$ & ${s_2}$ & $\bm{s_1}$ \\
	$\cdot$ & $\bm{s_2}$ & $s_1$ & $\cdot$ & ${s_4}$ \\
	$\cdot$ & ${s_1}$ & ${s_5}$ & $\cdot$ & $\cdot$ \\
	$\cdot$ & $\cdot$ & $\cdot$ & $\cdot$ & $\cdot$ \\
\end{tabular}
\end{center}
This outcome is {not} Pareto-efficient: students $i_2$ and $i_3$ can exchange assignments and both strictly gain.

The failure arises from {selection} among admissible cycle packings: Sequential JBC can terminate at a justifiable matching that is Pareto-dominated by another justifiable matching (and, crucially, without changing the set of beneficiaries).
This motivates the final ``\,+\,'' step in SJBC$+$: after Sequential JBC determines the set of beneficiaries, SJBC$+$ performs an additional Pareto-improvement stage (restricted to justifiable trades that preserve that beneficiary set), ensuring constrained efficiency within the class of justifiable improvements over DA.

\newpage

%%%%%%%%%%%%%%%%%%%%%%%%%%%%%%%%%%%%%%%%%%%%%%%%%%%%%%%%%%%%%%%%%%%%%%%%
\section{Justifiability Does Not Imply Preserving the JBC Beneficiary Set}
\label{app:justifiable_not_core}
%%%%%%%%%%%%%%%%%%%%%%%%%%%%%%%%%%%%%%%%%%%%%%%%%%%%%%%%%%%%%%%%%%%%%%%%
This appendix provides an instance in which there exists a justifiable matching that Pareto-improves on $\da(P)$, but whose beneficiary set is
not nested with that of $\jbc(P)$.
In particular, there is a justifiable improvement that benefits a student who does not benefit under $\jbc(P)$, while $\jbc(P)$ benefits students who do not benefit under that justifiable improvement.
Hence, justifiability alone does not force a reform to preserve the JBC beneficiary set.

Consider the following one-to-one instance with $I=\{i_1,\dots,i_6\}$ and
$S=\{s_1,\dots,s_6\}$.
Columns on the left list students' preferences (top to bottom), and columns on
the right list schools' priorities (top to bottom).
Bold entries on the left indicate DA assignments $\da(P)$, and bold entries on
the right indicate the students assigned to each school under DA.

\begin{example}[Preferences and Priorities]
	\label{ex:notcorebenefi}
	\centering
	\begin{tabular}{cccccc|cccccc}
		\hline
		$i_1$ & $i_2$ & $i_3$ & $i_4$ & $i_5$ & $i_6$			& $s_1$ & $s_2$ & $s_3$ & $s_4$ & $s_5$ & $s_6$ \\
		\hline
		$s_4$ & $s_1$ & $s_5$ & $s_6$ & $s_4$ & $s_1$			& \bm{$i_5$} & \bm{$i_3$} & \bm{$i_4$} & \bm{$i_2$} & \bm{$i_1$} & \bm{$i_6$} \\
		$s_6$ & \bm{$s_4$} & $s_6$ & $s_2$ & \bm{$s_1$} & $s_2$	& $i_4$ & $i_2$ & $\cdot$ & $i_4$ & $i_3$ & $i_3$ \\
		\bm{$s_5$} & $s_6$ & $s_4$ & $s_4$ & $s_6$ & \bm{$s_6$}	& $i_3$ & $i_4$ & $\cdot$ & $i_5$ & $i_2$ & $i_1$ \\
		$s_3$ & $s_2$ & \bm{$s_2$} & $s_5$ & $s_5$ & $s_5$		& $i_2$ & $i_5$ & $\cdot$ & $i_6$ & $i_4$ & $i_2$ \\
		$s_1$ & $s_3$ & $s_1$ & $s_1$ & $s_2$ & $s_4$			& $i_6$ & $i_6$ & $\cdot$ & {$i_3$} & $i_6$ & $i_5$ \\
		$s_2$ & $s_5$ & $s_3$ & \bm{$s_3$} & $s_3$ & {$s_3$}	& $i_1$ & $i_1$ & $\cdot$ & $i_1$ & $i_5$ & $i_4$ \\
	\end{tabular}
\end{example}

The DA outcome is
\[
\mu^0=\da(P):\quad
(i_1\rightarrow s_5,\ i_2\rightarrow s_4,\ i_3\rightarrow s_2,\ i_4\rightarrow s_3,\ i_5\rightarrow s_1,\ i_6\rightarrow s_6).
\]

Applying JBC yields
\[
\jbc(P):\quad
(i_1\rightarrow s_5,\ i_2\rightarrow s_1,\ i_3\rightarrow s_6,\ i_4\rightarrow s_3,\ i_5\rightarrow s_4,\ i_6\rightarrow s_2),
\]

The corresponding beneficiary set is
\[
\BB(\jbc(P))=\{i_2,i_3,i_5,i_6\}.
\]

Consider the following DA-improving matching:
\[
\mu:\quad
(i_1\rightarrow s_6,\ i_2\rightarrow s_4,\ i_3\rightarrow s_5,\ i_4\rightarrow s_3,\ i_5\rightarrow s_1,\ i_6\rightarrow s_2).
\]

Its beneficiary set is
\[
\BB(\mu)=\{i_1,i_3,i_6\}.
\]

Therefore the beneficiary sets are not nested:
\[
\BB(\mu)\setminus \BB(\jbc(P))=\{i_1\}\neq\emptyset,
\qquad
\BB(\jbc(P))\setminus \BB(\mu)=\{i_2,i_5\}\neq\emptyset.
\]

In particular, $\mu$ benefits a student ($i_1$) who does not benefit under JBC, while JBC benefits students ($i_2,i_5$) who do not benefit under $\mu$.

This example shows that justifiability alone does not force a reform to preserve the JBC beneficiary set, motivating the additional requirement of preserving that beneficiary set when evaluating constrained efficiency (as implemented by SJBC$+$).

\newpage
%%%%%%%%%%%%%%%%%%%%%%%%%%%%%%%%%%%%%%%%%%%%%%%%%%%%%%%%%%%%%%%%%%%%%%%%
\section{The Efficiency Guarantee of SJBC+}
\label{app:pablo}

The constrained-efficiency property in Theorem~\ref{thm:stpcplus} is tight: there exist problems in which a justifiable matching with a strictly larger beneficiary set Pareto-dominates an SJBC+ outcome.
Consider the school choice problem in Example~\ref{ex:pablo} (similar to the one used in the proof of Theorem~\ref{prop:efficiency}).
\begin{example}[Preferences and Priorities]
\label{ex:pablo}
\centering
\begin{tabular}{cccccc|cccccc}
\hline
$i_1$ & $i_2$ & $i_3$ & $i_4$ & $i_5$ & $i_6$
& $s_1$ & $s_2$ & $s_3$ & $s_4$ & $s_5$ & $s_6$ \\
\hline
$s_5$ & $s_6$ & $s_1$ & $s_2$ & $s_4$ & $s_2$
& \bm{$i_1$} & \bm{$i_2$} & $\cdot$ & \bm{$i_4$} & \bm{$i_5$} & \bm{$i_6$} \\
$s_4$ & $s_1$ & $s_2$ & \bm{$s_4$} & \bm{$s_5$} & $s_4$
& $i_3$ & $i_4$ & $\cdot$ & $i_3$ & $i_1$ & $i_2$ \\
\bm{$s_1$} & \bm{$s_2$} & $s_6$ & {} & {} & $s_1$
& $i_2$ & $i_3$ & $\cdot$ & $i_1$ & {} & {} \\
{} & {} & $s_4$ & {} & {} & \bm{$s_6$}
& $i_5$ & $i_6$ & {} & $i_6$ & {} & {} \\
{} & {} & \bm{$s_3$} & {} & {} & {}
& $i_6$ & {} & {} & $i_5$ & {} & {} \\
\hline
\end{tabular}
\end{example}

The DA matching appears in bold. The JBC mechanism finds a three-node cycle:
\[
i_1 \stackrel{\emptyset}{\rightarrow} i_4 \stackrel{\emptyset}{\rightarrow} i_2 \stackrel{\emptyset}{\rightarrow} i_1\]

Note that SJBC+ cannot expand to include $i_6$, which would require violating $i_5$'s priority at $s_1$, nor expand to include $i_5$, which would violate $i_6$'s priority at $s_4$. However, consider the cycle:
\[
i_1 \stackrel{\emptyset}{\rightarrow} i_5 \stackrel{i_1, i_6}{\rightarrow} i_4 \stackrel{\emptyset}{\rightarrow} i_2 \stackrel{\emptyset}{\rightarrow} i_6 \stackrel{i_2, i_5}{\rightarrow} i_1
\]
which includes all improvable students, and thus is justifiable. Furthermore, note that it Pareto-dominates the matching generated by JBC. 
Although one could strengthen SJBC+ to close this gap, it remains an open question whether doing so necessarily sacrifices the polynomial-time computation of SJBC+.

\newpage
%%%%%%%%%%%%%%%%%%%%%%%%%%%%%%%%%%%%%%%%%%%%%%%%%%%%%%%%%%%%%%%%%%%%%%%%
\section{Essential Stability and Justifiability}
\label{app:essential}

In the main text we stated the existing result that EADA with full consent is essentially stable, which means that any claim starting from a priority violation is vacuous, i.e. ends up with the claimant losing her claim \citep{troyan2020essentially}. This directly implies that essentially stable matchings need not be justifiable by Theorem \ref{prop:cc_vs_eada}. Now we show that justifiable and Pareto-efficient matchings need not be essentially stable by constructing a non-vacuous reassignment chain starting from a priority violation in Example \ref{ex:running}.

Consider the unique Pareto-efficient and justifiable matching obtained with the following trading cycles:
	\[
	\{(i_1 \stackrel{\emptyset}{\rightarrow} i_2 \stackrel{i_5}{\rightarrow} i_1),(i_3 \stackrel{\emptyset}{\rightarrow} i_6 \stackrel{i_1}{\rightarrow} i_4 \stackrel{\emptyset}{\rightarrow} i_5 \stackrel{\emptyset}{\rightarrow} i_3)\}
	\] 
	
Let us start with someone who suffers a priority violation making a claim. Say $i_1$ at $s_4$. This claim displaces $i_6$, who in turn claims $s_6$. This last claim displaces $i_3$, who in turn claims $s_3$, displacing $i_5$. In turn, $i_5$ claims $s_1$, displacing $i_2$, who now can go to the empty school $s_2$. We conclude that the claim is not vacuous, which in turn implies that the unique justifiable and Pareto-efficient matching need not be essentially stable.
\end{document}

%% file: exampleone.tex
		\begin{center}
	\begin{minipage}[c]{0.55\textwidth}
		\centering
		
		\scalebox{0.85}{%
			\begin{tabular}{ccccccc|ccccccc}
				\hline
				$i_1$ & $i_2$ & $i_3$ & $i_4$ & $i_5$ & $i_6$ & $i_7$ & $s_1$ & $s_2$ & $s_3$ & $s_4$ & $s_5$ & $s_6$ & $s_7$ \\
				\hline
				$s_6$ & $s_1$ & $s_6$ & $s_5$ & $s_3$ & $s_4$ & $s_4$ &
				$\bm{i_1}$ & $\bm{i_2}$ & $\bm{i_3}$ & $\bm{i_4}$ & $\bm{i_5}$ & $\bm{i_6}$ & $\cdot$ \\
				$s_4$ & $\bm{s_2}$ & $\bm{s_3}$ & $\bm{s_4}$ & $s_6$ & $\pmb{s_6}$ & $\pmb{s_7}$ &
				$i_5$ & $i_1$ & $i_5$ & $i_7$ & $i_4$ & $i_3$ & $\cdot$ \\
				$s_2$ & $\cdot$ & $\cdot$ & $\cdot$ & $s_4$ & $\cdot$ & $\cdot$ &
				$i_2$ & $\cdot$ & $i_1$ & $i_1$ & $i_1$ & $i_5$ & $\cdot$ \\
				$s_3$ & $\cdot$ & $\cdot$ & $\cdot$ & $s_1$ & $\cdot$ & $\cdot$ &
				$\cdot$ & $\cdot$ & $\cdot$ & $i_6$ & $\cdot$ & $i_1$ & $\cdot$ \\
				$s_5$ & $\cdot$ & $\cdot$ & $\cdot$ & $\bm{s_5}$ & $\cdot$ & $\cdot$ &
				$\cdot$ & $\cdot$ & $\cdot$ & $i_5$ & $\cdot$ & $\cdot$ & $\cdot$ \\
				$\bm{s_1}$ & $\cdot$ & $\cdot$ & $\cdot$ & $\cdot$ & $\cdot$ & $\cdot$ &
				$\cdot$ & $\cdot$ & $\cdot$ & $\cdot$ & $\cdot$ & $\cdot$ & $\cdot$ \\
				\hline
			\end{tabular}%
		}
		
	\end{minipage}\hfill
	\begin{minipage}[c]{0.44\textwidth}
		\centering
		\small
		
		\begin{tikzpicture}[scale=0.5, ->, node distance=1.5cm,
			every node/.style={draw, circle, minimum size=0.4cm}, thick]
			\def\radius{3.8}
			\def\n{6} % number of nodes
			
			% Nodes in a circle
			\foreach \i/\name in {1/i_1, 2/i_2, 3/i_3, 4/i_4, 5/i_5, 6/i_6} {
				\node (\name) at ({360/\n * (\i-1)}:\radius) {$\name$};
			}
			
			% Edges (your original)
			\draw (i_1) to node[blue,pos=0.7, above,draw=none, yshift=-15pt,xshift=15pt]{$\bm{i_3,i_5}$} (i_6);
			\draw (i_1) to node[blue, pos=0.1, above,draw=none, yshift=-10pt]{$\bm{\emptyset}$} (i_4);
			\draw (i_1) to node[blue,midway, above,draw=none, yshift=-10pt]{$\bm \emptyset$} (i_2);
			\draw (i_1) to node[blue,midway, above,draw=none, yshift=-8pt]{$\bm{i_5}$}(i_3);
			\draw[bend right=20] (i_1) to node[blue,pos=0.66, above,draw=none, yshift=-11pt]{$\bm{i_4}$} (i_5);
			
			\draw[bend left] (i_2) to node[blue,midway, above,draw=none, yshift=-6pt,xshift=2pt]{$\bm{i_5}$} (i_1);
			
			\draw (i_3) to node[blue,pos=0.3, above,draw=none, yshift=-11pt]{$\bm{\emptyset}$} (i_6);
			
			\draw (i_4) to node[blue,pos=0.5,draw=none, yshift=0pt]{$\bm{\emptyset}$} (i_5);
			
			\draw[bend right=10] (i_5) to node[blue,pos=0.1, above,draw=none, yshift=-12pt]{$\bm{\emptyset}$} (i_1);
			\draw[bend left] (i_5) to node[blue,pos=0.3,draw=none, xshift=-7pt, yshift=-6pt]{$\bm{i_1, i_6}$} (i_4);
			
			\draw (i_5) to node[blue,pos=0.5, below,draw=none, yshift=8pt]{$\bm{i_3}$} (i_6);
			\draw (i_5) to node[blue,pos=0.7, above,draw=none, yshift=-11pt]{$\bm{\emptyset}$} (i_3);
			
			\draw (i_6) to node[blue,pos=0.58, below,draw=none, yshift=13pt]{$\bm{i_1}$} (i_4);
		\end{tikzpicture}
		
	\end{minipage}
\end{center}

%% file: exampleeff.tex
\begin{tikzpicture}[scale=0.5, ->, node distance=2.5cm, every node/.style={draw, circle, minimum size=.7cm}, thick]
	\def\radius{3}
	
	% Define nodes in a circular layout
	\foreach \i/\name in {1/i_1, 2/i_2, 3/i_4, 4/i_5, 5/i_6} {
		\node[circle, draw] (\name) at ({-360/5 * (\i+1)}:\radius) {$\name$}; 
	}
	
	% Connections between nodes
	\draw[thick] (i_1) to node[blue,pos=0.1, above,draw=none, yshift=-11pt]{$\bm{\emptyset}$}  (i_4);
	\draw[thick] (i_1) to node[blue,pos=0.15, above,draw=none, yshift=-11pt]{$\bm{\emptyset}$} (i_5);
	
	\draw[thick] (i_2) to node[blue,pos=0.15, above,draw=none, yshift=-11pt]{$\bm{\emptyset}$} (i_1);
	\draw[thick, bend left] (i_2) to node[blue,pos=0.15, above,draw=none, yshift=-11pt]{$\bm{\emptyset}$} (i_6);
	
	\draw[thick] (i_4) to node[blue,pos=0.15, above,draw=none, yshift=-11pt]{$\bm{\emptyset}$} (i_2);
	
	\draw[thick] (i_5) to node[blue,pos=0.1, above,draw=none, yshift=-11pt]{$\bm{i_1, i_6}$} (i_4);
	
	\draw[thick] (i_6) to node[blue,pos=0.1, above,draw=none, yshift=-11pt]{$\bm{i_4}$} (i_2);
	\draw[thick] (i_6) to node[blue,pos=0.65, below,draw=none, yshift=9pt]{$\bm{i_1}$} (i_4);
\end{tikzpicture}

%% file: aug_paths.tex
	\begin{tikzpicture}[scale=0.38, thick,
	every node/.style={draw, circle, minimum size=0.35cm, font=\scriptsize}]
	
	% ============ PANEL (a): ALL SELF-LOOPS ============
	
	\def\xa{0}
	
	% Left column
	\foreach \i/\y in {1/10, 2/8, 3/6, 4/4, 5/2, 6/0} {
		\node (A_L\i) at (\xa, \y) {$i_\i$};
	}
	% Right column
	\foreach \i/\y in {1/10, 2/8, 3/6, 4/4, 5/2, 6/0} {
		\node (A_R\i) at (\xa+6, \y) {$i_\i$};
	}
	
	% All self-loops (dotted gray)
	\draw[dotted, gray, thick] (A_L1) -- (A_R1);
	\draw[dotted, gray, thick] (A_L2) -- (A_R2);
	\draw[dotted, gray, thick] (A_L3) -- (A_R3);
	\draw[dotted, gray, thick] (A_L4) -- (A_R4);
	\draw[dotted, gray, thick] (A_L5) -- (A_R5);
	\draw[dotted, gray, thick] (A_L6) -- (A_R6);
	
	\draw[ gray, thick] (A_L1) -- (A_R2);		
	\draw[ gray, thick] (A_L1) -- (A_R4);		
	\draw[ gray, thick] (A_L3) -- (A_R6);		
	\draw[gray, thick] (A_L5) -- (A_R3);		
	\draw[gray, thick] (A_L5) -- (A_R1);		
	\draw[ gray, thick] (A_L4) -- (A_R5);		
	% Labels
	\node[draw=none, font=\small] at (\xa, 11.5) {\textbf{Out}};
	\node[draw=none, font=\small] at (\xa+6, 11.5) {\textbf{In}};
	\node[draw=none, font=\small] at (\xa+3, 13) {\textbf{(a) Initial}};
	
	% ============ PANEL (b): AFTER JBC + AUGMENTING PATH ============
	
	\def\xb{14}
	
	% Left column
	\foreach \i/\y in {1/10, 2/8, 3/6, 4/4, 5/2, 6/0} {
		\node (B_L\i) at (\xb, \y) {$i_\i$};
	}
	% Right column
	\foreach \i/\y in {1/10, 2/8, 3/6, 4/4, 5/2, 6/0} {
		\node (B_R\i) at (\xb+6, \y) {$i_\i$};
	}
	
	% JBC matching (blue, thick)
	\draw[blue, ultra thick] (B_L1) -- (B_R4);
	\draw[blue, ultra thick] (B_L4) -- (B_R5);
	\draw[blue, ultra thick] (B_L5) -- (B_R1);
	
	% Self-loops (dotted gray)
	\draw[dotted, gray, thick] (B_L2) -- (B_R2);
	\draw[dotted, gray, thick] (B_L3) -- (B_R3);
	\draw[dotted, gray, thick] (B_L6) -- (B_R6);
	
	% Augmenting path (dashed orange)
	\draw[->, orange, dashed, ultra thick] (B_L2) -- (B_R1);
	\draw[->, orange, dashed, ultra thick] (B_L5) -- (B_R3);
	\draw[->, orange, dashed, ultra thick] (B_L3) -- (B_R6);
	\draw[->, orange, dashed, ultra thick] (B_L6) -- (B_R4);
	\draw[->, orange, dashed, ultra thick] (B_L1) -- (B_R2);
	
	% Labels
	\node[draw=none, font=\small] at (\xb, 11.5) {\textbf{Out}};
	\node[draw=none, font=\small] at (\xb+6, 11.5) {\textbf{In}};
	\node[draw=none, font=\small] at (\xb+3, 13) {\textbf{(b) JBC + augmenting path}};
	
	% ============ PANEL (c): OPTIMAL MATCHING ============
	
	\def\xc{28}
	
	% Left column
	\foreach \i/\y in {1/10, 2/8, 3/6, 4/4, 5/2, 6/0} {
		\node (C_L\i) at (\xc, \y) {$i_\i$};
	}
	% Right column
	\foreach \i/\y in {1/10, 2/8, 3/6, 4/4, 5/2, 6/0} {
		\node (C_R\i) at (\xc+6, \y) {$i_\i$};
	}
	
	% Optimal matching (red, thick)
	\draw[red, ultra thick] (C_L1) -- (C_R2);
	\draw[red, ultra thick] (C_L2) -- (C_R1);
	\draw[red, ultra thick] (C_L3) -- (C_R6);
	\draw[red, ultra thick] (C_L4) -- (C_R5);
	\draw[red, ultra thick] (C_L5) -- (C_R3);
	\draw[red, ultra thick] (C_L6) -- (C_R4);
	
	% Labels
	\node[draw=none, font=\small] at (\xc, 11.5) {\textbf{Out}};
	\node[draw=none, font=\small] at (\xc+6, 11.5) {\textbf{In}};
	\node[draw=none, font=\small] at (\xc+3, 13) {\textbf{(c) SJBC+}};
	
\end{tikzpicture}

%% file: twomatchings.tex
\begin{minipage}{0.45\textwidth}
	\centering
	\begin{tikzpicture}[scale=0.7, ->, node distance=2.5cm, every node/.style={draw, circle, minimum size=1cm}, thick]
		\def\radius{3}
		
		% Define nodes in a circular layout (only 6 nodes)
		\foreach \i/\name in {1/i_1, 2/i_2, 3/i_3, 4/i_4, 5/i_5, 6/i_6} {
			\node[circle, draw] (\name) at ({360/6 * (\i-1)}:\radius) {$\name$}; 
		}
		
		% Color nodes 4,5,1,6 with blue borders
		\foreach \i in {1,4,5,6} {
			\node[circle, draw=blue, minimum size=1cm] (i_\i) at ({360/6 * (\i-1)}:\radius) {$i_\i$}; 
		}
		
		% Draw edges based on the given connections
		% 1->6,4,2,3,5
		\draw[ultra thick, blue] (i_1) to (i_6);
		\draw[thick] (i_1) to (i_4);
		\draw[thick] (i_1) to (i_2);
		\draw[thick] (i_1) to (i_3);
		\draw[thick] (i_1) to (i_5);
		
		% 2->1
		\draw[thick] (i_2) to (i_1);
		
		% 3->6
		\draw[thick] (i_3) to (i_6);
		
		% 4->5
		\draw[ultra thick, blue, bend right] (i_4) to (i_5);
		
		% 5->1,4,6,3
		\draw[ultra thick, blue, bend right=10] (i_5) to (i_1);
		\draw[thick] (i_5) to (i_4);
		\draw[thick] (i_5) to (i_6);
		\draw[thick] (i_5) to (i_3);
		
		% 6->4
		\draw[ultra thick, blue] (i_6) to (i_4);
	\end{tikzpicture}
\end{minipage}
\hfill
\begin{minipage}{0.45\textwidth}
	\centering
	\begin{tikzpicture}[scale=0.7, ->, node distance=2.5cm, every node/.style={draw, circle, minimum size=1cm}, thick]
		\def\radius{3}
		
		% Define nodes in a circular layout (only 6 nodes)
		\foreach \i/\name in {1/i_1, 2/i_2, 3/i_3, 4/i_4, 5/i_5, 6/i_6} {
			\node[circle, draw] (\name) at ({360/6 * (\i-1)}:\radius) {$\name$}; 
		}
		
		% Color nodes 1-6 with red borders
		\foreach \i in {1,2,3,4,5,6} {
			\node[circle, draw=red, minimum size=1cm] (i_\i) at ({360/6 * (\i-1)}:\radius) {$i_\i$}; 
		}
		
		% Draw edges based on the given connections
		% 1->6,4,2,3,5
		\draw[thick] (i_1) to (i_6);
		\draw[thick] (i_1) to (i_4);
		\draw[ultra thick, red] (i_1) to (i_2);
		\draw[thick] (i_1) to (i_3);
		\draw[thick] (i_1) to (i_5);
		
		% 2->1
		\draw[ultra thick, red] (i_2) to (i_1);
		
		% 3->6
		\draw[ultra thick, red] (i_3) to (i_6);
		
		% 4->5
		\draw[ultra thick, red, bend right] (i_4) to (i_5);
		
		% 5->1,4,6,3
		\draw[thick, bend right=10] (i_5) to (i_1);
		\draw[thick] (i_5) to (i_4);
		\draw[thick] (i_5) to (i_6);
		\draw[ultra thick, red] (i_5) to (i_3);
		
		% 6->4
		\draw[ultra thick, red] (i_6) to (i_4);
	\end{tikzpicture}
\end{minipage}

%% file: tree.tex
\resizebox{\linewidth}{!}{%
	\begin{forest}
		for tree={
			grow'=south,
			parent anchor=south,
			child anchor=north,
			align=center,
			edge={-Latex},
			l sep=10mm,
			s sep=16mm,
			inner sep=2pt,
			font=\small,
		}
		[{$i_7,s_4$}
		% ===== YES (left) =====
		[{$i_3,s_6$}, edge label={node[midway,left,font=\scriptsize]{yes}}
		% YES (left)
		[{$i_5,s_6$}, edge label={node[midway,left,font=\scriptsize]{yes}}
		[{$1\to 6\to 4\to 5\to 1$}, draw, rectangle, rounded corners, 
		edge label={node[midway,left,font=\scriptsize]{yes}}]
		[{$1\to 4\to 5\to 1$}, draw, rectangle, rounded corners, 
		edge label={node[midway,right,font=\scriptsize]{no}}]
		]
		% NO (right)
		[{$1\to 4\to 5\to 1$}, draw, rectangle, rounded corners, 
		edge label={node[midway,right,font=\scriptsize]{no}}]
		]
		% ===== NO (right) =====
		[{$i_4,s_5$}, edge label={node[midway,right,font=\scriptsize]{no}}
		% YES (left)
		[{$i_3,s_6$}, edge label={node[midway,left,font=\scriptsize]{yes}}
		[{$1\to 5\to 1$}, draw, rectangle, rounded corners,
		edge={-Latex, dashed},
		edge label={node[midway,left,font=\scriptsize]{$\cdots$}}
		]
		]
		% NO (right)
		[{$i_5,s_1$}, edge label={node[midway,right,font=\scriptsize]{no}}
		% YES (left)
		[{$i_3,s_6$}, edge label={node[midway,left,font=\scriptsize]{yes}}
		[{$1\to 2\to 1$}, draw, rectangle, rounded corners,
		edge={-Latex, dashed},
		edge label={node[midway,left,font=\scriptsize]{$\cdots$}}
		]
		]
		% NO (right)
		[{$i_3,s_6$}, edge label={node[midway,right,font=\scriptsize]{no}}
		[{no\\cycles}, draw, circle, minimum size=10mm,
		edge={-Latex, dashed},
		edge label={node[midway,right,font=\scriptsize]{$\cdots$}}
		]
		]
		]
		]
		]
	\end{forest}%
}